\documentclass[12pt]{article}
\usepackage[utf8]{inputenc}

\usepackage{amsmath,amsfonts,amssymb,latexsym,amsthm,verbatim,a4wide,url,appendix}

\bibliographystyle{abbrv} 

 \theoremstyle{plain}

\newtheorem{theorem}{Theorem}[section]
\newtheorem{lemma}[theorem]{Lemma}
\newtheorem{proposition}[theorem]{Proposition}
\newtheorem{definition}[theorem]{Definition}
\newtheorem{remark}[theorem]{Remark}
\newtheorem{example}[theorem]{Example}
\newtheorem{corollary}[theorem]{Corollary}
\newtheorem{assumption}{Assumption}

\newcommand{\vd}[1]{\boldsymbol{#1}}
\newcommand{\vc}[1]{{\mathbf{#1}}}
\newcommand{\ep}{\mathbb{E}}
\newcommand{\pr}{\mathbb{P}}
\newcommand{\re}{\mathbb{R}}
\newcommand{\wt}[1]{\widetilde{#1}}
\newcommand{\var}{{\rm Var\;}}

\newcommand{\rhocorr}{\rho_{PCC}}
\newcommand{\LL}{L}
\newcommand{\Cn}{C_{(n)}}
\newcommand{\CnG}[2]{C_{(#1,#2)}}
\newcommand{\fn}[1]{#1^{(n)}}
\newcommand{\ft}[1]{#1^{(2)}}

\newcommand{\ff}[1]{#1^{(1)}}
\newcommand{\fm}[1]{#1^{(m)}}
\newcommand{\fnG}[3]{#1^{(#2,#3)}}
\newcommand{\Theb}[1]{\Theta^{(#1)}}
\newcommand{\TheG}[2]{\Theta^{(#1,#2)}}

\newcommand{\Jst}{J_{\rm st}}
\newcommand{\skewn}{\gamma_3}
\newcommand{\ol}[1]{\overline{#1}}

\title{Maximal correlation and the rate of Fisher information convergence in the Central Limit Theorem}
\author{Oliver Johnson}
\date{\today}

\begin{document}

\maketitle

\begin{abstract}
\noindent We consider the behaviour of the Fisher information of scaled sums of independent and identically distributed random variables in the Central Limit Theorem regime. We show how this behaviour can be related to the second-largest non-trivial eigenvalue of the operator associated with the Hirschfeld--Gebelein--R\'{e}nyi maximal correlation. We prove that assuming this eigenvalue satisfies a strict inequality, an $O(1/n)$ rate of convergence and a strengthened form of monotonicity hold.
\end{abstract}

\section{Introduction}

Consider independent and identically distributed (i.i.d.) random variables $Y_i \sim Y$ taking values in $\re$, with mean $0$ and variance $\sigma^2 < \infty$, and write $S_n := Y_1 + Y_2 + \ldots + Y_n$ for their sum. We assume that the $Y_i$ have smooth densities, and consider the behaviour of the Fisher information in the Central Limit Theorem regime.  

\begin{definition} \label{def:fisher}
For any random variable $X \in \re$ with absolutely continuous density $p_X$ we define the Fisher score function (with respect to location parameter) $\varrho_X(x) := p_X'(x)/p_X(x)$ and Fisher information $J(X) := \ep \varrho_X(X)^2 = \int p_X(x) \varrho_X(x)^2 dx$. Further, as in \cite{johnson5}, we write the standardized Fisher information (standardized Fisher divergence)
\begin{equation} \label{def:Jst} \Jst(X) := (\var X) \ep \left( \varrho_X(X) + \frac{X}{\var X} \right)^2 = (\var X) J(X) - 1.\end{equation}
\end{definition}

The quantity $\Jst(X)$ is scale--invariant, and is $(\var X)$ times the quantity sometimes referred to as Fisher divergence or as Fisher information distance. 
The non-negativity of $\Jst$ is equivalent to the standard Cram\'{e}r-Rao lower bound (see for example \cite[Eq. (2.1)]{stam}), with equality holding if and only if $X$ is Gaussian. Hence, if $\Jst(X)$ is `small', then intuitively $X$ should be `close to Gaussian'. In fact, controlling the standardized Fisher information gives a strong sense of convergence to Gaussian, with control of $\Jst$ implying control of total variation distance, Hellinger distance and the supremum distance between densities (see \cite[Lemma 1.5]{johnson5}) and relative entropy (see \cite[p409]{johnson5}). The fact that absolute continuity is a sufficient condition for the existence of Fisher information is discussed for example in \cite[Section 4.4]{huberrobust}.

We follow Courtade \cite{courtade2016} in analysing Fisher information using quantities related to the (Hirschfeld--Gebelein--R\'{e}nyi) maximal correlation $\rho_{\max}$ \cite{gebelein1941, hirschfeld1935,renyi1959}. It is well-known that the standard (Pearson) correlation coefficient $\rhocorr(U,V)$ only captures linear relationships between random variables, and hence can be zero even when $U$ and $V$ are dependent. In contrast, the maximal correlation between random variables $U$, $V$ is the largest correlation between non-constant well-behaved functions of them 
\begin{equation}
\rho_{\max}(U,V) := \sup_{f,g} \rhocorr \left( f(U), g(V) \right). \end{equation} Like the mutual information, $\rho_{\max}(U,V)$ is zero if and only if $U$ and $V$ are independent, see \cite{gebelein1941, hirschfeld1935,renyi1959}.  The maximal correlation has found application in information theory partly because of its relation to hypercontractivity and the strong data processing constant \cite{anantharam3,kamath}.

Courtade \cite{courtade2016} gave a direct and simple proof of the monotonicity of Fisher information in the Central Limit Theorem regime, using the fact that for i.i.d. $Y_i$ the maximal correlation between sums of different sizes satisfies 
\begin{equation} \label{eq:DKS} 
\rho_{\max}(S_m, S_n) = \sqrt{\frac{m}{n}} \mbox{\;\;\; for all $m \leq n$}. \end{equation} 
This fact, which we call the Dembo--Kagan--Shepp (DKS) identity \cite{DKS2001}, can be understood through an equivalent formulation of $\rho_{\max}$ as the largest non-trivial singular value of conditional expectation operators (see Section \ref{sec:condexpprop}). This identity was originally proved in \cite{DKS2001} under the assumption that $Y_i$ have finite variance, a condition subsequently relaxed in \cite{bryc2}. We note that
Courtade's proof \cite{courtade2016} of monotonicity via the DKS identity only recovers the result along i.i.d. sequences, which is less general than the `leave-one-out' inequality proved by Artstein, Ball, Barthe and Naor  \cite{artstein}.  However, Courtade has subsequently shown that many monotonicity results, including the DKS identity and the general subset inequalities of Madiman and Barron \cite{madiman} can be seen as immediate consequences of Shearer's lemma \cite{courtade2}.

In this paper we work with a quantity $\Theb{n}$ defined in terms of the second-largest non-trivial singular value of the same conditional expectation operators, defined in Definition \ref{def:higher} below, and satisfying $\Theb{n} \geq 0$ by the Dembo--Kagan--Shepp identity \cite{DKS2001}. Under a technical diagonalizability condition (Assumption \ref{ass:diag} below, which is assumed to hold throughout) a more detailed analysis of $\Theb{n}$ using the Efron--Stein (ANOVA) decomposition \cite{efronstein} allows us to deduce the following result:

\begin{theorem} \label{thm:main}
Consider i.i.d. $Y_i \sim Y$ with mean $0$ and variance $\sigma^2 < \infty$ and smooth densities on $\re$. For any $n$, writing $\Theb{2}$ for the quantity from Definition \ref{def:higher} below, then
\begin{equation} \label{eq:jstmain}
 \Jst \left( \frac{Y_1 + \ldots + Y_n}{\sqrt{n}} \right)
\leq  \frac{\Jst(Y)}{1 + \Theb{2} (n-1)}.
\end{equation}
In other words, if $\Theb{2} > 0$ then we achieve a $O(1/n)$ convergence rate of standardized Fisher information.
\end{theorem}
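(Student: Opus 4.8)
The plan is to reduce \eqref{eq:jstmain} to a spectral estimate on conditional expectation operators and then to establish that estimate via the Efron--Stein decomposition. First I would pass from the Fisher information to the \emph{residual score} $\wt{\varrho}_X(x) := \varrho_X(x) + x/\var X$. Integration by parts ($\ep[\varrho_Y(Y)Y] = -1$) gives $\ep \wt{\varrho}_Y(Y) = 0$, $\ep[\wt{\varrho}_Y(Y)Y] = 0$ and $\ep\wt{\varrho}_Y(Y)^2 = \Jst(Y)/\sigma^2$; in particular $\wt{\varrho}_Y(Y) \perp \{1,Y\}$ in $L^2$. Splitting $S_n = Y_1 + (Y_2 + \cdots + Y_n)$, Stam's convolution identity for scores \cite{stam} gives $\varrho_{S_n}(S_n) = \ep[\varrho_Y(Y_1)\mid S_n]$, and since $\ep[Y_1 \mid S_n] = S_n/n$ by exchangeability, the standardization corrections cancel, so that
\begin{equation} \label{eq:scoreCE}
\wt{\varrho}_{S_n}(S_n) := \varrho_{S_n}(S_n) + \frac{S_n}{n\sigma^2} = \ep\left[ \wt{\varrho}_Y(Y_1) \,\middle|\, S_n \right],
\end{equation}
whence $\Jst\big((Y_1+\cdots+Y_n)/\sqrt n\big) = \var(S_n)\,\ep\wt{\varrho}_{S_n}(S_n)^2 = n\sigma^2 \big\| \ep[\wt{\varrho}_Y(Y_1)\mid S_n] \big\|_{L^2}^2$. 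The same computation with $S_{n-1}$ in place of $Y_1$ (its residual score also being orthogonal to $\{1,S_{n-1}\}$, and $\ep[S_{n-1}\mid S_n] = \tfrac{n-1}{n}S_n$) gives the telescoping identity $\wt{\varrho}_{S_n}(S_n) = \ep[\wt{\varrho}_{S_{n-1}}(S_{n-1})\mid S_n]$.

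Second, I would feed \eqref{eq:scoreCE} into the maximal-correlation machinery. The map $\psi \mapsto \ep[\psi(Y_1)\mid S_n]$ is the adjoint of a conditional expectation operator whose singular values form the maximal correlation spectrum of $(Y_1,S_n)$; by the Dembo--Kagan--Shepp identity its largest non-trivial singular value equals $\rho_{\max}(Y_1,S_n) = 1/\sqrt n$ and, under the diagonalizability Assumption \ref{ass:diag} together with the description of the singular functions in Section \ref{sec:condexpprop}, is attained only on the linear function. Since $\wt{\varrho}_Y(Y) \perp \{1,Y\}$, conditioning on $S_n$ therefore contracts it by at most the \emph{second} non-trivial singular value, which is exactly (the normalization of) the quantity $\Theb{n}$ of Definition \ref{def:higher}, so that
\begin{equation} \label{eq:thetabound}
\Jst\left( \frac{Y_1 + \cdots + Y_n}{\sqrt n} \right) \le \Theb{n} \, \Jst(Y) .
\end{equation}
(Equivalently one iterates: from $\wt{\varrho}_{S_n}(S_n) = \ep[\wt{\varrho}_{S_{n-1}}(S_{n-1})\mid S_n]$ and the orthogonality of both residual scores to the relevant linear functions, Courtade's Cauchy--Schwarz argument run with the second singular value gives $\Jst(S_n) \le \tfrac{n}{n-1}\,\rho_2(S_{n-1},S_n)^2\, \Jst(S_{n-1})$.) It therefore remains to prove the purely spectral inequality
\begin{equation} \label{eq:spectralgoal}
\Theb{n} \le \frac{1}{1 + \Theb{2}(n-1)} .
\end{equation}

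The third and hardest step is \eqref{eq:spectralgoal}, and this is where the Efron--Stein (ANOVA) decomposition does the real work. I would expand the self-adjoint operator $\psi \mapsto \ep\big[\,\ep[\psi(Y_1)\mid S_n]\,\big|\, Y_1\big]$ in an eigenbasis which, because the $Y_i$ are i.i.d., does not depend on $n$; the Efron--Stein decomposition of $L^2(Y_1,\dots,Y_n)$ then block-diagonalizes the conditional-expectation-given-$S_n$ operator along ANOVA orders, the order-$d$ block's eigenvalues being expressed through the two-variable eigenvalues of the same type and multinomial weights counting the number of coordinates on which the corresponding ANOVA component depends. Carrying out this bookkeeping, the dominant linear-free contribution is governed by $\Theb{2}$, and summing the combinatorial weights over the $n-1$ coordinates adjoined to $Y_1$ produces precisely the denominator $1 + \Theb{2}(n-1)$; concretely I expect this to take the form of a reciprocal superadditivity $1/\Theb{n} \ge 1/\Theb{n-1} + \Theb{2}$, which with $\Theb{1}=1$ gives \eqref{eq:spectralgoal} and hence \eqref{eq:jstmain} via \eqref{eq:thetabound}. (In the iterated formulation one instead shows $\tfrac{n}{n-1}\rho_2(S_{n-1},S_n)^2 \le \tfrac{1+\Theb{2}(n-2)}{1+\Theb{2}(n-1)}$ and closes the induction against $\Jst(S_{n-1}) \le \Jst(Y)/(1+\Theb{2}(n-2))$.) The main obstacle is exactly this combinatorial control: one must track the ANOVA weights sharply enough that the constant emerges as $1 + \Theb{2}(n-1)$ rather than merely $O(n)$, and here the non-negativity of $\Theb{2}$ (from Dembo--Kagan--Shepp) together with monotonicity of $\Jst$ is used to discard the non-negative higher-order error terms.
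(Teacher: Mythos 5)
Your outline follows essentially the same route as the paper: Theorem \ref{thm:main} there is proved by combining Proposition \ref{prop:fishbd1} (spectral contraction of $\Jst$, using Stam's score identity and the orthogonality of the residual score to $\{1,Y\}$) with Proposition \ref{prop:DKShigher} ($\Theb{n}\ge (n-1)\Theb{2}$, proved via Efron--Stein). You have correctly identified both pieces. Two issues prevent this from being a proof as written. First, a normalization slip: with Definition \ref{def:higher} the contraction factor is $n\fn{\lambda}_2 = \frac{1}{1+\Theb{n}}$, so the correct form of your \eqref{eq:thetabound} is $\Jst(U_n)\le \frac{1}{1+\Theb{n}}\Jst(Y)$, not $\Jst(U_n)\le \Theb{n}\Jst(Y)$ (the latter is vacuous already in the Gaussian case where $\Theb{n}=n-1$). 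You flag the normalization issue, but it recurs in \eqref{eq:spectralgoal} and in the recursion $1/\Theb{n}\ge 1/\Theb{n-1}+\Theb{2}$, which as stated mixes two different normalizations of the same symbol; translated consistently into the paper's notation your spectral goal is simply $\Theb{n}\ge (n-1)\Theb{2}$.

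Second, and more substantively, the critical Efron--Stein step is left as an expectation rather than a proof. You are right that the ANOVA expansions $\ep h(S_k)^2 = \sum_{r}\binom{k}{r}\ep h_r^2$ and $\ep\widehat h(S_\ell)^2 = \sum_{r}\binom{\ell}{r}\ep h_r^2$ share the same components $h_r$; but what actually drives Lemma \ref{lem:DKShigher} is the elementary coefficient comparison $\frac{\ell(\ell-1)}{k(k-1)}\binom{k}{r}\ge \binom{\ell}{r}$ for $2\le r\le \ell$, with equality at $r=2$. Without identifying this inequality there is no way to absorb the higher-order ANOVA blocks ``with exactly the right weights,'' and the claimed recursion is not established. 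Once that comparison is in hand, the paper in fact applies it directly with $\ell=2$, bypassing the induction over $n$ entirely (so no need for $\Theb{n-1}$ or the intermediate identity $\wt{\varrho}_{S_n}=\ep[\wt{\varrho}_{S_{n-1}}\mid S_n]$), though your iterative formulation would also work and recovers the slightly stronger statement that $\Theb{k}/(k-1)$ is nondecreasing.
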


Theorem \ref{thm:main} follows directly by combining Propositions \ref{prop:fishbd1} and \ref{prop:DKShigher} below. Note that Artstein, Ball, Barthe and Naor \cite{artstein2} and Johnson and Barron \cite{johnson5} both proved an $O(1/n)$ rate of convergence of standardized Fisher information (and hence of relative entropy) for one-dimensional random variables assuming finiteness of the Poincar\'{e} constant (this was extended to the $\re^d$ case by \cite{ball} under a stronger assumption of log-concavity). However, since (see Lemma \ref{lem:weaker} below), finiteness of the Poincar\'{e} constant implies $\Theb{2} > 0$, we can regard our condition as weaker. As with Poincar\'{e} constants, positivity condition $\Theb{2} > 0$ implies finiteness of moments of all orders (see Proposition \ref{prop:R2bound} below). However, unlike finiteness of Poincar\'{e} constants, positivity of $\Theb{2}$ does not directly require that the support of $Y$ is connected.

 To illustrate the relationship between moments and $\Jst$, we further prove  a lower bound on the Fisher information which tightens the lower bound of \cite[Lemma 1.4]{johnson5}, and which complements the upper bound in Theorem \ref{thm:main}:

\begin{lemma} \label{lem:fishbd3}
For i.i.d. $Y_1, \ldots, Y_n \sim Y$ the standardized Fisher information satisfies
\begin{equation} \Jst(U_n) \geq \frac{\skewn^2}{\Sigma + 2(n-1)}, \label{eq:fishLB} \end{equation}
where $\skewn = \ep Y^3/\sigma^3$ is the skewness of $Y$ and $\Sigma = \ep Y^4/\sigma^4 - (\ep Y^3/\sigma^3)^2 - 1 \geq 0$.
\end{lemma}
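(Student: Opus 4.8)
The plan is to derive \eqref{eq:fishLB} from the elementary variational (Cauchy--Schwarz) lower bound for Fisher information, with a quadratic test function chosen so as to bring in the skewness. First I would dispose of the trivial case: if $\ep Y^4 = \infty$ then $\Sigma = \infty$ and the right-hand side of \eqref{eq:fishLB} is $0$, so we may assume $\ep Y^4 < \infty$, whence $U_n$ also has finite fourth moment. Writing $X = U_n$ and letting $\xi_X(x) := \varrho_X(x) + x/\var X$ be the standardized score, so that $\Jst(X) = (\var X)\,\ep\,\xi_X(X)^2$, a standard integration by parts---legitimate here since $U_n$ has a smooth density (a convolution of smooth densities) with finite moments up to order four, cf.\ \cite[Section 4.4]{huberrobust}---gives $\ep[\varrho_X(X) h(X)] = -\ep[h'(X)]$ for polynomially bounded $C^1$ functions $h$ with $h(X) \in L^2(p_X)$. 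Hence Cauchy--Schwarz yields
\[
 \Jst(X) \;\ge\; (\var X)\, \frac{\bigl(\ep[X h(X)]/\var X - \ep[h'(X)]\bigr)^2}{\ep[h(X)^2]} .
\]

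The point that makes this sharp enough is that, since $\ep X = 0$, the standardized score $\xi_X$ is orthogonal in $L^2(p_X)$ both to the constants and to the map $x \mapsto x$: indeed $\ep\,\xi_X(X) = 0$, and $\ep[\xi_X(X) X] = \ep[\varrho_X(X) X] + 1 = -1 + 1 = 0$. Therefore, in the Cauchy--Schwarz step one may replace $h$ by its residual after $L^2(p_X)$-orthogonal projection onto $\mathrm{span}\{1, x\}$ without altering the numerator, while only shrinking the denominator. Taking $h(x) = x^2$, so that $\ep[\varrho_X(X) X^2] = -2\,\ep X = 0$, the numerator is $(\ep[X^3]/\var X)^2$ and the shrunken denominator is $\ep[X^4] - (\var X)^2 - (\ep[X^3])^2/\var X$, which yields
\[
 \Jst(X) \;\ge\; \frac{(\ep[X^3])^2/\var X}{\ep[X^4] - (\var X)^2 - (\ep[X^3])^2/\var X} .
\]

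It then remains to substitute $X = U_n$ and simplify. By scale-invariance of $\Jst$ we may assume $\sigma = 1$, so that $W := U_n$ has unit variance and the displayed bound reads $\Jst(W) \ge (\ep W^3)^2/(\ep W^4 - 1 - (\ep W^3)^2)$. Since the summands $Y_i$ are i.i.d.\ with mean $0$, variance $1$, third moment $\skewn$ and fourth moment $\Sigma + \skewn^2 + 1$, one has $\ep W^3 = \skewn/\sqrt{n}$ and $\ep W^4 = (\Sigma + \skewn^2 + 1)/n + 3(n-1)/n$; substituting and clearing the common factor $1/n$ from numerator and denominator leaves exactly $\skewn^2/(\Sigma + 2(n-1))$, which is \eqref{eq:fishLB}. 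The asserted $\Sigma \ge 0$ is automatic, since (with $\sigma = 1$) $\Sigma$ is the squared $L^2(p_Y)$-norm of the residual of $Y^2$ after projection onto $\mathrm{span}\{1, Y\}$, equivalently $\var(Y^2 - \skewn Y)$.

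The only step calling for genuine care is the integration-by-parts identities $\ep[\varrho_X(X) X^k] = -k\,\ep X^{k-1}$ for $k = 1, 2$, which require the boundary terms $x^k p_X(x)$ to vanish as $|x| \to \infty$; this follows from smoothness of the density of $U_n$ together with finiteness of its moments up to order four. Granting that, the remainder is routine bookkeeping, so I anticipate no substantive obstacle.
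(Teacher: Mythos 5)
Your proof is correct and, once the projection step is unwound, it coincides with the paper's: the paper applies Cauchy--Schwarz directly to $\langle \varrho_{S_n}+S_n/(n\sigma^2), h\rangle$ with the already-centred-and-detrended quadratic $h(s)=s^2-as-n\sigma^2$ (where $a=\ep Y^3/\sigma^2$), which is exactly the residual of $s^2$ after your $L^2$-projection onto $\mathrm{span}\{1,s\}$; you simply perform that projection explicitly and work with $U_n$ instead of $S_n$ (immaterial by scale-invariance of $\Jst$). Your moment bookkeeping and the final simplification to $\skewn^2/(\Sigma+2(n-1))$ check out.
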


The upper and lower bounds on $\Jst(U_n)$ given by \eqref{eq:jstmain} and \eqref{eq:fishLB} are compatible in the sense that (since by \eqref{eq:R2LB} the $\Theb{2} \leq \frac{2}{\Sigma}$) we know
$$\frac{\Jst(Y)}{1 + \Theb{2} (n-1)} \geq \frac{\Jst(Y)}{1 + (2/\Sigma)(n-1)} 
= \frac{\Sigma \Jst(Y)}{\Sigma + 2 (n-1)} \geq  \frac{\skewn^2}{\Sigma + 2 (n-1)},
$$
where the final inequality simply follows from the case $n=1$ of \eqref{eq:fishLB}.

The need for finiteness of the Poincar\'{e} constant to ensure $O(1/n)$ convergence of Fisher information and of relative entropy was removed in subsequent work of Bobkov, Chistyakov and G\"{o}tze (see for example \cite{bobkov2014fisher} for Fisher information and \cite{bobkov6, bobkov8} for relative entropy). These papers proved this rate of convergence under the assumption of finite fourth moment, as well as a variety of related results under a moment-matching assumption. Note that (by Lemma \ref{lem:RnExample} below) if the fourth moment is infinite, our methods do not give $O(1/n)$ convergence, so our results should be regarded as weaker. However, papers \cite{bobkov6, bobkov8} used a detailed argument involving Edgeworth expansions, truncation of densities and analysis of the characteristic function to derive their results. We believe our results are obtained in a more straightforward way, and the connection to maximal correlation in this context may be of independent interest. Further, we prove a novel strengthened form of monotonicity, Theorem \ref{thm:main2}, which places monotonicity and convergence results in the same framework, whereas they have often historically been treated separately.

An alternative perspective was provided by Courtade, Fathi and Pananjady \cite{courtade3}, who weakened the Poincar\'{e} constant assumption to require only the existence of a Stein kernel $\tau$ (which holds for any centered random variable with connected support). Using this, they proved an $O(1/n)$ rate of convergence in Wasserstein distance and an $O(\log n/n)$ rate of convergence in relative entropy, with the speed of convergence being dictated by the Stein discrepancy (squared distance from the Stein kernel $\tau$ to the identity). This work has the considerable advantage of holding in more general spaces $\re^d$ for $d \geq 1$. It would be of interest to understand the relationship between our $\Theb{2} > 0$ condition and the Stein condition of \cite{courtade3}.

The problem of proving information--theoretic versions of the Central Limit Theorem is a long-standing one, the early history of which is reviewed in \cite{johnson14}. In particular, we mention work of Linnik \cite{linnik} and Shimizu \cite{shimizu}. However, our work follows the idea of studying projections of score functions, and follows a path first set out by Stam \cite{stam}, Brown \cite{brown}, Barron \cite{barron}, as well as exploiting subsequent developments. In particular, the analysis of \cite{johnson5} exploited the fact that in the limit the score function of the limit must simultaneously be both a ridge function (a function $f(x_1+ \ldots + x_n)$) and close to being the sum $f_1(x_1) + \ldots + f_n(x_n)$, and hence must be close to being linear.

This analysis generalized a key step  in the work of Brown (and later in Barron \cite{barron}), which was an inequality \cite[Lemma 3.1]{brown} concerning properties of Hermite polynomials, which are orthogonal in the Gaussian case. Our work can be seen as giving an alternative generalization of this, using an orthogonal function expansion based on the Singular Value Decomposition.  The link between these two ideas is the fact that the Hermite polynomials provide the Singular Value Decomposition of conditional expectations in the Gaussian case  (see \cite[Theorem 3]{makur} and Example \ref{ex:hermite}).

The structure of the remainder of the paper is as follows: in Section \ref{sec:condexpprop} we formally define the conditional expectation operators and the eigenvalue--related quantity $\Theb{n}$. In Section \ref{sec:condexpprop2} we give examples where we can calculate $\Theb{n}$ explicitly, discuss properties of $\Theb{n}$ and show how it relates to other quantities. In Section \ref{sec:scoreconv} we discuss how standard results allow us to control the value of the standardized Fisher information $\Jst$ on convolution, in terms of $\Theb{n}$. In Section \ref{sec:DKShigher} we discuss how to control higher order terms in the Dembo--Kagan--Shepp argument, and hence bound $\Theb{n}$ in terms of $\Theb{2}$. In Section \ref{sec:strengthen} we show how these arguments imply a stronger form of monotonicity of Fisher information. We conclude with some suggestions for future work in Section \ref{sec:futurework}. 

\section{Conditional expectation operator definitions} \label{sec:condexpprop}

We introduce notation based on \cite{makur}. For any probability measure $\pr$ we write $L^2(\pr)$ for the Hilbert space endowed with inner product
$\langle f, g \rangle_{\pr} = \int f(x) g(x) d\pr(x)$.
 Write $\pr_{Y}$ and $\pr_{S_n}$ for the law of the relevant random variables, where as before $S_n = Y_1 + \ldots + Y_n$.
\begin{definition}
Define conditional expectation operator $\Cn: \LL^2(\pr_{Y}) \mapsto \LL^2(\pr_{S_n})$ and its adjoint $\Cn^*: \LL^2(\pr_{S_n}) \mapsto \LL^2(\pr_{Y})$ by:
\begin{eqnarray}
(\Cn f)(s) & = & \ep \left[ f(Y_1) | S_n = s \right], \label{eq:Cndef} \\
(\Cn^* g)(y) & = & \ep \left[ g(S_n) | Y_1 = y \right]. \label{eq:CnStdef}
\end{eqnarray}
\end{definition}

These maps are adjoint in the sense that (by direct calculation, or the tower law) for all $f$ and $g$: 
\begin{equation} \label{eq:adjoint}
\left\langle g, \Cn f \right\rangle_{\pr_{S_n}} = \left\langle \Cn^* g, f \right\rangle_{\pr_{Y}} = \ep \left[ f(Y_i) g(S_n) \right]. \end{equation}

\begin{assumption} \label{ass:diag}
We assume throughout this paper that the self-adjoint map $\Cn^* \Cn$ is diagonalizable. 
\end{assumption}

\begin{definition} \label{def:eigenfunctions}
Under Assumption \ref{ass:diag} write $(\ff{f}_k)_{k = \{0,1, \ldots \}} \in \LL^2(\pr_{Y})$ for the basis of orthonormal eigenfunctions of $\Cn^* \Cn$,
 with corresponding eigenvalues $\fn{\lambda}_k$ and singular values
$\fn{\mu}_k = \sqrt{ \fn{\lambda}_k}$. Here, without loss of generality, we assume that 
$$ 1 = \fn{\lambda}_0 \geq \fn{\lambda}_1 \geq \ldots \geq  0.$$
We write $\fn{g}_k = (\Cn \ff{f}_k)/\fn{\mu}_k$ for the scaled images of these eigenfunctions. \end{definition}

\begin{remark} \label{rem:eigen} \mbox{ }
\begin{enumerate}
    \item Note that by \eqref{eq:adjoint} the functions $\fn{g}_k$ are orthonormal in $\LL^2(\pr_{S_n})$. 
Further, note that 
\begin{equation} \Cn^* ( \fn{g}_k)  = \frac{1}{\fn{\mu}_k} (\Cn^*  \Cn \ff{f}_k)
= \fn{\mu}_k \ff{f}_k. \label{eq:gimage} \end{equation}
\item \label{it:eigen1}
Note that $\ff{f}_0 = \fn{g}_0 \equiv 1$ and the pair $(\ff{f}_1 , \fn{g}_1)$ achieves the maximum correlation since by \eqref{eq:adjoint} we know
$$ \ep \left( \ff{f}_1(Y_i) \fn{g}_1(S) \right) = \left\langle \fn{g}_1 , (\Cn \ff{f}_1) \right\rangle_{\pr_S} = 
\left\langle \fn{g}_1 , \fn{\mu}_1 \fn{g}_1 \right\rangle_{\pr_S} = \fn{\mu}_1.$$
\item \label{it:iid}
In this i.i.d. case, we can take $\ff{f}_1(y) = y/\sigma$, $\fn{g}_1(s) = s/(\sigma \sqrt{n})$ with $\fn{\mu}_1 = 1/\sqrt{n}$.
 This choice of functions has the relevant properties since by symmetry (or the fact that averages of i.i.d. random variables form a reverse martingale)
$$ (\Cn \ff{f}_1)(s) = \frac{1}{\sigma} \ep( Y_1 | S_n = s) = \frac{s}{\sigma n}
= \fn{\mu}_1 \fn{g}_1(s),$$
and
$$ (\Cn^* \fn{g}_1)(y) = \frac{1}{\sigma \sqrt{n}} \ep( y + Y_2 + \ldots + Y_n) = \frac{1}{\sigma \sqrt{n}} y = \fn{\mu}_1 \ff{f}_1(y).$$
The DKS identity \eqref{eq:DKS} tells us that no larger value of $\fn{\mu}_1$ is possible.
\end{enumerate}
\end{remark}
The focus of this paper will be the quantity $\Theb{n}$  defined  in terms of the 
second-highest non-trivial eigenvalue of the self-adjoint map $\Cn^* \Cn$ as:

\begin{definition} Using the notation above, write
\label{def:higher}
\begin{equation} \label{eq:higher}
 \Theb{n} :=   \frac{1}{n \fn{\lambda}_2} - 1 =
 \inf_{h: \ep h(S_n) = \ep S_n h(S_n) = 0} \frac{1}{n} \frac{\ep \left( h(S_n)^2 \right)}{ \ep 
\left( (\Cn^* h) (Y)^2\right)}  - 1.   \end{equation}
\end{definition}
The Dembo--Kagan--Shepp identity \cite{DKS2001} means that for $k \geq 2$, eigenvalues $\fn{\lambda}_k$ are $\leq 1/n$, which ensures that $\Theb{n} \geq 0$.
While we are not aware of existing results in the literature that bound $\fn{\lambda}_k$ for $k \geq 2$, we remark that the higher order eigenfunctions $\ff{f}_k$ and $\fn{g}_k$ (for $1 \leq k \leq K$, for some fixed $K$) have been used in a manner similar to Principal Components Analysis to capture significant high-order features of datasets \cite{makur2015efficient}.

One possible strategy to show that $\Theta^{(2)} > 0$ is to show that $\Cn$ and $\Cn^*$ are compact operators (recall from e.g. \cite[Section 3.1]{simon2015} that a compact linear operator is one for which the image of any bounded subset has compact closure). The Riesz--Schauder Theorem  \cite[Theorem 3.3.1]{simon2015} states that the only possible accumulation point of eigenvalues of a compact operator is at 0, so if the eigenspace corresponding to $\lambda = 1/2$ has dimension 1 then we can deduce $\ft{\lambda}_2 < 1/2$. We consider the second point in Remark \ref{rem:unique} below, and discuss the question of compactness now. 

This compactness is stated as \cite[Assumption 5.2]{breiman1985}, which states that it `is satisfied in most cases of interest' and in particular if a sufficient condition \cite[Eq. (5.4)]{breiman1985} holds -- we derive this condition here for completeness. As in \cite[Eq. (40]{makur} we can expand the Radon-Nikodym derivative between joint and marginal densities using the Singular Value Decomposition as:
\begin{equation} 
\tau_{n}(y,s) := \frac{p_{Y_1,S_n}(y,s)}{p_{Y_1}(y) p_{S_n}(s)} = \sum_{k=0}^\infty \fn{\mu}_k \ff{f}_k(y) \fn{g}_k(s). 
\end{equation}
Note that $(\Cn f)(s) = \int \tau_n(z,s) p_Y(z) f(z) dz$ and $(\Cn^* g)(y) = \int \tau_n(y,s) p_{S_n}(s) g(s) ds$, so
\begin{equation}
\left( \Cn^* \Cn f \right)(y) = \int f(z) p_Y(z) L_n(z,y) dz,
\end{equation}
where $L_n(z,y) := \int p_{S_n}(s) \tau_n(z,s) \tau_n(y,s)  ds$ is  symmetric, as expected. Then, $(\Cn^* \Cn)$ is compact if this is a trace-class operator (see \cite[Section 3.6]{simon2015}), or in other words that we can use Mercer's Theorem (\cite[Theorem 3.11.9]{simon2015}) to verify that
\begin{equation}
T_n(Y) := \int p_Y(y) L_n(y,y) dy  = \iint p_Y(y) p_{S_n}(s) \tau_n(y,s)^2  dy ds < \infty
\end{equation}
(this is \cite[Eq. (5.4)]{breiman1985}). 

Note that this quantity $T_n(Y)$ has the property that $T_n(Y) - 1 = D_{\chi^2}( p_{Y_1, S_n} \| p_{Y} \times p_{S_n})$,
where  $Y$ is an independent copy of $Y_1$ and
we write $D_{\chi^2}(f \| g) = \int (f(x)/g(x) -1)^2 g(x) dx= \int f(x)^2/g(x) dx - 1$ for the $\chi^2$-divergence.
Using this, an anonymous referee provided a prood of the following theorem, which shows that arbitrarily small Gaussian regularizations of sub-Gaussian random variables have the trace-class property:

\begin{theorem} \label{thm:subgauss}
For any $\delta > 0$, taking $Z$ Gaussian with mean $0$ and variance $\delta^2$ and $Y \sim X + Z$,
then writing $X'$ for an independent copy of $X$:
$$ D_{\chi^2}( p_{Y_1, S_n} \| p_{Y} \times p_{S_n}) \leq
 \frac{1}{1-1/n} \ep \left( \exp \left( \frac{ (X_1 - X_1')^2}{(n-1) \delta^2} \right) \right).$$
Hence if $X$ is sub-Gaussian then  for $n$ sufficiently large $ T_n(Y) < \infty$ and hence $(\Cn^* \Cn)$ is compact.
\end{theorem}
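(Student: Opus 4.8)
The starting point is the identity recorded just before the statement, $T_n(Y)-1=D_{\chi^2}(p_{Y_1,S_n}\,\|\,p_{Y}\times p_{S_n})$, together with the factorisation $p_{Y_1,S_n}(y,s)=p_{Y_1}(y)\,p_W(s-y)$ coming from $S_n=Y_1+W$ with $W:=Y_2+\cdots+Y_n$ independent of $Y_1$. Hence
\begin{equation*}
 T_n(Y)=\iint \frac{p_{Y_1}(y)\,p_W(s-y)^2}{p_{S_n}(s)}\,dy\,ds .
\end{equation*}
The plan is to exploit the Gaussian increments. With $Y_i=X_i+Z_i$, the variable $W$ carries a Gaussian summand $Z_2+\cdots+Z_n\sim N(0,(n-1)\delta^2)$ and $S_n$ carries $Z_1+\cdots+Z_n\sim N(0,n\delta^2)$; conditioning on the non-Gaussian parts $(X_1,\dots,X_n)$ makes $(Y_1,S_n)$ bivariate Gaussian with means $(X_1,\,X_1+\cdots+X_n)$, variances $(\delta^2,n\delta^2)$ and correlation $1/\sqrt n$ (the Dembo--Kagan--Shepp value), so the integrand becomes a ratio of Gaussian densities that one can integrate explicitly.

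Carrying this out, I would expand $p_W(s-y)^2=\ep_{V_1,V_2}[\psi(s-y-V_1)\psi(s-y-V_2)]$, with $\psi$ the $N(0,(n-1)\delta^2)$-density and $V_1,V_2$ independent copies of $V:=X_2+\cdots+X_n$, and write $p_{Y_1}$, $p_{S_n}$ through $\psi$ using $N(0,n\delta^2)=N(0,\delta^2)*N(0,(n-1)\delta^2)$. The inner $s$-integral is then $\int\psi(s-p)\psi(s-q)/\varphi(s-r)\,ds$ with $\varphi$ the $N(0,n\delta^2)$-density; completing the square gives a constant times $\exp\!\big(-\tfrac{(p-q)^2}{4(n-1)\delta^2}+\tfrac{((p+q)/2-r)^2}{(n+1)\delta^2}\big)$, after which the $y$-integral is again Gaussian. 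Tracking the Gaussian moments, the accumulated constant collapses --- a one-line determinant identity --- to exactly $n/(n-1)=1/(1-1/n)$, leaving an expectation of the exponential of a quadratic form in $X_1,X_1'$ (the non-Gaussian parts of two independent copies of $Y$) and in the sum-variables; matching this with $\exp\!\big((X_1-X_1')^2/((n-1)\delta^2)\big)$ is then bookkeeping.

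The delicate point --- and where I expect the real work to lie --- is the marginal $p_{S_n}$ in the denominator. A crude pointwise bound $p_{S_n}(s)\ge c\,\varphi(s)$ (e.g.\ by Jensen) costs a dimensionless constant that blows up as $\delta\to0$ and, worse, reintroduces the full $(n-1)$-fold sum $V$ into the exponent; either defect would force an extraneous lower bound on $\delta$, whereas the stated estimate has only the single-copy difference $X_1-X_1'$ surviving. So $p_{S_n}$ must be absorbed against the $s$-dependence of the numerator essentially exactly, using that its Gaussian smoothing has strictly larger variance ($n\delta^2$) than the kernel $\psi$ in the numerator ($(n-1)\delta^2$) --- exactly the condition making the ratio-of-Gaussians integral convergent.

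Finally, granting the bound, the sub-Gaussian conclusion is immediate: if $X$ has variance proxy $s^2$ then $X_1-X_1'$ has proxy $2s^2$, so $\ep[\exp((X_1-X_1')^2/((n-1)\delta^2))]\le(1-4s^2/((n-1)\delta^2))^{-1/2}<\infty$ whenever $(n-1)\delta^2>4s^2$, i.e.\ for any fixed $\delta>0$ once $n$ is large. Thus $T_n(Y)<\infty$, so by the discussion preceding the statement $L_n$ is the kernel of a trace-class operator and, by Mercer's theorem, $\Cn^*\Cn$ is trace-class, hence compact.
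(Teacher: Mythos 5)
Your starting identity and your diagnosis of where the difficulty lies are both correct: $T_n(Y)-1$ is the $\chi^2$-divergence, $p_{Y_1,S_n}(y,s)=p_{Y_1}(y)\,p_W(s-y)$, and the variance gap between $n\delta^2$ and $(n-1)\delta^2$ is indeed what controls the estimate. But the brute-force route you sketch is not completed, and that gap is real: expanding $p_W(s-y)^2$ as $\ep[\psi(s-y-V_1)\psi(s-y-V_2)]$ still leaves $p_{S_n}(s)$ --- itself an expectation of $N(\cdot,n\delta^2)$-densities --- sitting in the denominator, and there is no way to pull that expectation out of the denominator or to replace $p_{S_n}$ by a single Gaussian without losing exactly the factor that makes the exponent shrink like $1/(n-1)$. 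You have correctly located the obstacle but not removed it.

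The missing idea is to move the expectation over the non-Gaussian components \emph{outside the entire divergence} via joint convexity of $f$-divergences (Lemma~\ref{lem:fdiv}), rather than integrating term by term. Since $(Y_1,U_n)=(X_1,V_n)+G$ with $G\sim\gamma_{\vc{0};\,\delta^2 R_{1/\sqrt n}}$, and likewise $p_Y\times p_{U_n}=(p_X\times p_{V_n})\star\gamma_{\vc{0};\,\delta^2 I_2}$, joint convexity bounds $D_{\chi^2}(p_{Y_1,U_n}\,\|\,p_Y\times p_{U_n})$ by $\ep\bigl[D_{\chi^2}(\gamma_{(X_1,V_n);\delta^2 R_\rho}\,\|\,\gamma_{(X_1,V_n');\delta^2 I_2})\bigr]$ for \emph{any} coupling of $p_{X,V_n}$ with $p_X\times p_{V_n}$, and the inner divergence has the closed form of Lemma~\ref{lem:chi2val}. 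The second key step is the choice of coupling: taking $V_n'=n^{-1/2}(X_1'+X_2+\cdots+X_n)$, i.e.\ swapping only $X_1$ for an independent copy inside the sum, makes the mean shift $(0,\,n^{-1/2}(X_1-X_1'))$; combined with $1-\rho^2=1-1/n$ this gives precisely the exponent $(X_1-X_1')^2/((n-1)\delta^2)$. Your expansion over two independent copies $V_1,V_2$ does not isolate this single-coordinate difference, and the more obvious coupling (swapping the first component $X_1\mapsto X_1'$ while keeping $V_n$) yields the far weaker exponent $(X_1-X_1')^2/((1-1/n)\delta^2)$, which does not decay in $n$. Once the bound is granted, your closing argument for the sub-Gaussian case is fine.
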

\begin{proof} See Appendix \ref{sec:subgausspf}. \end{proof}

We briefly mention that by linearizing the logarithm, we can bound the mutual information $I(Y_1; S_n) = D( P_{Y_1,S_n} \| P_{Y_1} P_{S_n}) = \int p_{Y_1}(y) p_{S_n}(s) \tau_n(y,s) \ln \left( \tau_n(y,s) \right) dy ds \leq T_n(Y)- 1$. Further, 
$I(Y_1; S_n) = H(Y_1) + H(S) - H(Y_1,S_n) = H(Y_1) + H(S_n) - H(Y_1,Y_2) =  H(S_n) - H(Y) \geq (\log 2)/2$,
where the lower bound on $H(S_n) - H(Y)$ follows by Shannon's Entropy Power Inequality. In other words, finiteness of $T_2 := T_2(Y)$ ensures a reverse Entropy Power Inequality of the form $\exp(2 h(S_n)) \leq C \exp(2 h(Y)) $, where $C = \exp( 2(T_2-1))$.

\begin{example}
In the case where $p \sim N(0,1)$, we can explicitly write down $\tau_{2}(y,s) = \sqrt{2}
\exp \left( -\frac{1}{4} ( s^2 - 4 s y + 2 y^2) \right)$, and direct calculation gives that
$$ T_2 =  \iint p_Y(y) p_{S_2}(s) \tau_{2}(y,s)^2  dy ds = 2.$$
This confirms the values in Example \ref{ex:hermite} below, which gives that the eigenvalues are $\ft{\lambda}_k = 2^{-k}$, and so $\sum_{k=0}^\infty \ft{\lambda}_k = 2$, confirming the value of the trace by Lidskii's Theorem \cite[Corollary 3.12.3]{simon2015}.
\end{example}

\begin{remark} This formulation gives an alternative proof of the Dembo--Kagan--Shepp identity for $n=2$, using the fact that $\tau_{2}(z,s) p_Y(z) = p_Y(z) p_Y(s-z) = 
\tau_{2}(s-z,s) p_Y(s-z)$. Fix $s$ and for function $f(z)$, write $\ol{f}(z) = f(s-z)$. Then
$$ \Cn \ol{f}(s) = \int \tau_{2}(z,s) p_Y(z) \ol{f}(z) dz
= \int \tau_{2}(s-z,s) p_Y(s-z) f(s-z) dz = \Cn f(s).$$
Hence, for any $f$ with $\int f(z) p_Y(z) dz = 0$, Cauchy-Schwarz gives
\begin{eqnarray}
 \left( 2 \Cn f(s) \right)^2 
& = &  \left( \int \tau_{2}(z,s) p_Y(z) \left( f(z) + \ol{f}(z) \right) dz \right)^2 \nonumber \\
& \leq & \left( \int \tau_{2}(z,s) p_Y(z) \left( f(z) + \ol{f}(z) \right)^2 dz \right) \left(  \int \tau_{2}(z,s) p_Y(z) dz \right) \label{eq:CSstep} \\
& = & \left( \int \tau_{2}(z,s) p_Y(z) \left( f(z) + \ol{f}(z) \right)^2 dz \right), \nonumber
\end{eqnarray}
since $\int p_Y(z) \tau_{2}(z,s) dz \equiv 1$. The result follows on multiplying by $p_S(s)$ and integrating, to deduce that
$4 \ep (\Cn f(S))^2 \leq 2 \ep f(Y)^2$, or $\lambda^{(2)}_2  \leq 1/2$.

Note that (see also Remark \ref{rem:unique} below)  equality holds in \eqref{eq:CSstep} if and only if $f(z) + \ol{f}(z)$ is constant in $z$. Taking a derivative with respect to $z$, we deduce that $f'$ must be constant, or that $f$ linear is the unique eigenfunction achieving $\lambda = 1/2$.
\end{remark}

\section{Conditional expectation operator properties} \label{sec:condexpprop2}

We now review two examples where we can explicitly calculate the eigenfunctions and eigenvalues of $\Cn^* \Cn$, using properties of orthogonal polynomials \cite{abramowitz}, and hence deduce the value of $\Theb{n}$. Instead of orthogonal polynomials, these calculations can alternatively be performed using properties of the associated semigroups (Ornstein--Uhlenbeck and Laguerre semigroups, respectively). First, the Gaussian case (see also \cite{makur}):

\begin{example} \label{ex:hermite}
If $Y_i$ are Gaussian with variance $\sigma^2$, then $\ff{f}$ and $\fn{g}$ are orthonormalized Hermite polynomials. For any $\alpha$ we define $H_n^{(\alpha)}(x) = H_n(x/\sqrt{\alpha})$ (where $H_n$ are the Hermite polynomials, which are orthogonal with respect to standard Gaussian weights). By adapting the addition formula  \cite[Eq. (22.12.8)]{abramowitz} or by direct calculation using  the generating function 
we know that for any $\tau^2$, $n$ and $m$:
\begin{equation} \label{eq:addition}
H_m^{(n \tau^2)}(x+y) = \sum_{k=0}^m \binom{m}{k} \left( \frac{n-1}{n} \right)^{k/2} \left( \frac{1}{n} \right)^{(m-k)/2} H_{m-k}^{(\tau^2)}(x) H_k^{((n-1) \tau^2)}(y).    
\end{equation}
Taking $\tau^2 = \sigma^2$ in \eqref{eq:addition}, and since for $Z$ Gaussian with mean $0$ and variance $\sigma^2(n-1)$ we know $\ep H_k^{((n-1) \sigma^2)}(Z) = 0$ for $k \geq 1$,  we can deduce that
\begin{equation} 
\Cn^* H_m^{(n \sigma^2)}(x) = \ep H_m^{(n \sigma^2)}(x+ Z) 
= \frac{1}{n^{m/2}} H_m^{(\sigma^2)}(x).
\end{equation}
Taking $\ff{f}_k = H_k^{(\sigma^2)}/\sqrt{k!}$ with 
$\fn{g}_k = H_k^{(n \sigma^2)}/\sqrt{k!}$ and $\fn{\mu}_k = n^{-k/2}$ we have $\Cn^* \fn{g}_k = \fn{\mu}_k \ff{f}_k$ as required. 

For completeness, the property that $\Cn \ff{f}_k = \fn{\mu}_k \fn{g}_k$ follows since for fixed $s$ the $Y | (S_n = s)
\sim s/n + \wt{Z}$, where $\wt{Z}$ is Gaussian with mean $0$ and variance $(n-1) \sigma^2/n$. Hence taking $\tau^2 = \sigma^2/n$ in the addition formula \eqref{eq:addition} we obtain
\begin{eqnarray*}
\Cn H_m^{(\sigma^2)}(s) = \ep H_m^{(\sigma^2)}(s/n + \wt{Z}) = \frac{1}{n^{m/2}} H_m^{(\sigma^2/n)}(s/n) = \frac{1}{n^{m/2}} H_m^{(n \sigma^2)}(s), 
\end{eqnarray*}
where the final identity follows by definition of $H_m^{(\alpha)}$.

We deduce that $\fn{\lambda}_2 = 1/n^2$  and so $\Theb{n} = n-1$, with $\Theb{2} = 1$ in particular.
\end{example}

Next, we give a similar argument in the gamma distributed case.
Note that although the $Y_i$ do not have mean 0, the argument carries through essentially unchanged on centering.

\begin{example} \label{ex:laguerre}
If $Y_i$ are   $\Gamma(\beta,1)$ distributed then, writing $L^{(\alpha)}$ for the generalized Laguerre polynomials (orthogonal with respect to $\Gamma(\alpha+1,1)$),  a similar addition formula \cite[Eq. (22.12.6)]{abramowitz} holds:
\begin{equation}
L_m^{(\alpha + \beta + 1)}(x+y) = \sum_{i=0}^m L_i^{(\alpha)}(x) L_{m-i}^{(\beta)}(y). 
\end{equation}
For $\ff{f}_k = L^{(\beta-1)}_k/\sqrt{ \binom{k + \beta-1}{k}}$ with $\fn{g}_k = L^{(\beta n - 1)}_k/\sqrt{ \binom{k + \beta n-1}{k}}$ and $\fn{\mu}_k = \sqrt{ \binom{k + \beta-1}{k}/
\binom{k + \beta n-1}{k}}$ we deduce $\Cn^* \fn{g}_k = \fn{\mu}_k \ff{f}_k$ as required.

The property that $\Cn \ff{f}_k = \fn{\mu}_k \fn{g}_k$ follows by expressing the conditional density of $Y |S_n$ in terms of a beta function and using \cite[Eq. (22.13.13)]{abramowitz}:
\begin{equation}
\Gamma( \beta n + k) \int_{0}^z (z-y)^{\beta(n-1)-1} y^{\beta-1} L_k^{(\beta-1)}(y) dy = \Gamma( \beta+k) \Gamma(\beta(n-1)) z^{\beta n - 1} L_k^{(\beta n - 1)}(z),    
\end{equation}
to deduce that $\Cn L_k^{(\beta-1)}(s) = \left( \fn{\mu}_k \right)^2 L_k^{(\beta n -1)}(s)$, and rescaling.

Hence $\fn{\lambda}_2 = \binom{\beta+1}{2}/\binom{\beta n +1}{2} = (\beta+ 1)/(n(\beta n +1))$ and so $\Theb{n} = \beta(n-1)/(\beta+1)$, with $\Theb{2} = \beta/(\beta+1)$ in particular.
\end{example}

Note that (as we might expect) the larger the value of $\beta$, the closer the value of $\Theb{2} = \beta/(\beta+1)$ obtained in Example \ref{ex:laguerre} becomes to the value $\Theb{2} = 1$ obtained for the Gaussian case in Example \ref{ex:hermite}.

Next, motivated by the fact that in both the Gaussian and gamma cases the eigenfunction $\ff{f}_2$ is quadratic, we use properties of quadratic functions to deduce an upper bound on $\Theb{n}$ involving third and fourth moments.

\begin{lemma} \label{lem:RnExample} For $Y_i$ i.i.d. $\sim Y$ with mean 0 and variance $\sigma^2$,
define the scale-invariant quantity $\Sigma = \ep Y^4/\sigma^4 - (\ep Y^3/\sigma^3)^2 - 1 \geq 0$ (kurtosis minus squared skewness minus $1$), which does not depend on $n$. Then
\begin{equation} \label{eq:RnLB}
 \Theb{n} \leq  \frac{2(n-1)}{\Sigma},\end{equation}
In particular, taking $n=2$ in \eqref{eq:RnLB} we deduce
\begin{equation} \label{eq:R2LB}
\Theb{2} \leq \frac{2}{\Sigma}.
\end{equation}
\end{lemma}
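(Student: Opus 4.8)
The plan is to exhibit a concrete near-eigenfunction for $\Cn^*\Cn$ at level $2$ and apply the variational characterization of $\Theb{n}$ from Definition \ref{def:higher}. Since in both the Gaussian and gamma examples $\ff{f}_2$ is quadratic, the natural candidate is the function $h(s) = \fn{g}(s)$ built from $s^2$, orthogonalized (in $L^2(\pr_{S_n})$) against the constants and against the linear function $s$, so that it is admissible in the infimum $\inf_{h:\,\ep h(S_n) = \ep S_n h(S_n)=0}$. Concretely I would set $q_n(s) = s^2 - a_n s - b_n$ with $a_n,b_n$ chosen (in terms of the moments of $S_n$, hence of $\ep Y^3$ and $\ep Y^4$) to enforce $\ep q_n(S_n) = \ep S_n q_n(S_n) = 0$; since $S_n$ has mean $0$, this forces $a_n$ proportional to $\ep S_n^3$ and $b_n = \var S_n = n\sigma^2$. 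Then Definition \ref{def:higher} gives
\begin{equation*}
\Theb{n} \;\leq\; \frac{1}{n}\,\frac{\ep\!\left(q_n(S_n)^2\right)}{\ep\!\left((\Cn^* q_n)(Y)^2\right)} - 1,
\end{equation*}
and the whole problem reduces to computing the two moments on the right.

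The key computation is $(\Cn^* q_n)(y) = \ep[q_n(S_n)\mid Y_1 = y]$. Writing $S_n = y + S_{n-1}'$ with $S_{n-1}'$ an independent sum of $n-1$ copies of $Y$, we get $\ep[(y+S_{n-1}')^2] = y^2 + (n-1)\sigma^2$ (the cross term vanishes since $\ep S_{n-1}' = 0$), and $\ep[y + S_{n-1}'] = y$, so $(\Cn^* q_n)(y) = y^2 + (n-1)\sigma^2 - a_n y - n\sigma^2 = y^2 - a_n y - \sigma^2$. Thus $(\Cn^* q_n)(Y)$ is (up to the linear shift) the centered quadratic in $Y$, and $\ep((\Cn^* q_n)(Y)^2)$ is a fixed combination of $\ep Y^4, \ep Y^3, \sigma^4$; after subtracting off the projection onto $\{1, y\}$ one recognizes exactly $\sigma^4 \Sigma = \ep Y^4 - (\ep Y^3)^2/\sigma^2 - \sigma^4$. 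Similarly $\ep(q_n(S_n)^2)$ is the analogous quantity for $S_n$ in place of $Y$; expanding $\ep S_n^3$ and $\ep S_n^4$ in terms of the moments of $Y$ (using independence: $\ep S_n^3 = n\,\ep Y^3$, $\ep S_n^4 = n\,\ep Y^4 + 3n(n-1)\sigma^4$) one finds that the centered second moment of $q_n(S_n)$ equals $n\sigma^4\Sigma + 2n(n-1)\sigma^4$ — i.e. $n\sigma^4(\Sigma + 2(n-1))$. Plugging in:
\begin{equation*}
\Theb{n} \;\leq\; \frac{1}{n}\cdot\frac{n\sigma^4(\Sigma + 2(n-1))}{\sigma^4 \Sigma} - 1 \;=\; \frac{\Sigma + 2(n-1)}{\Sigma} - 1 \;=\; \frac{2(n-1)}{\Sigma},
\end{equation*}
which is \eqref{eq:RnLB}; setting $n=2$ gives \eqref{eq:R2LB}.

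The only real obstacle is bookkeeping: one must keep the orthogonalization against both $1$ and the linear function consistent between the $Y$-side and the $S_n$-side so that the ratio comes out as a clean function of $\Sigma$ alone, and one must double-check the variance expansions $\ep S_n^3 = n\ep Y^3$ and $\ep S_n^4 = n\ep Y^4 + 3n(n-1)\sigma^4$ for centered $Y$. (A cleaner packaging: work directly with $Q_n := q_n(S_n)$ and note $\Cn^* Q_n$ is again a quadratic-minus-linear in $Y$, so that after projecting onto the orthocomplement of $\mathrm{span}\{1,\ \text{id}\}$ both numerator and denominator are genuine variances of centered quadratics, and the claim is the elementary identity relating $\var(aX^2 + \dots)$ for $X = S_n$ versus $X = Y$.) One should also note in passing that $\Sigma \geq 0$ — which is exactly the $n=1$ Cauchy–Schwarz statement that $(\ep Y^3)^2 \leq \ep Y^4 \cdot \ep Y^2 \cdot$ — well, more precisely that the Gram matrix of $\{1,Y,Y^2\}$ is positive semidefinite, so the bound is vacuous (both sides $+\infty$) precisely when $Y$ is supported on two points, consistent with the quadratic $\ff{f}_2$ then being degenerate.
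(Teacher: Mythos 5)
Your proposal is correct and follows essentially the same route as the paper: you substitute the quadratic test function $h(s) = s^2 - (\ep Y^3/\sigma^2)\,s - n\sigma^2$ into the variational characterization of $\Theb{n}$, compute $\Cn^* h(y) = y^2 - (\ep Y^3/\sigma^2)\,y - \sigma^2$, and evaluate $\ep h(S_n)^2 = n\sigma^4\Sigma + 2n(n-1)\sigma^4$ and $\ep (\Cn^* h)(Y)^2 = \sigma^4\Sigma$, exactly as in the paper's proof. The additional remarks on packaging and the degenerate two-point case are fine but not needed.
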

\begin{proof} Consider the function $h(s) = s^2 - a s - n \sigma^2$, where taking $a = \ep Y^3/\sigma^2$ ensures that $\ep h(S_n) S_n = \ep S_n^3 - a \ep S_n^2 = n \ep Y^3 - a n \sigma^2 = 0$ as required. Direct calculation shows that 
$\Cn^* h(y) = y^2 - a y - \sigma^2$. Further, expanding the square we can show $\ep h(S_n)^2 = n \sigma^4 \Sigma + 2 n(n-1) \sigma^4$ and $\ep \left( (\Cn^* h)(Y)^2 \right) = \sigma^4 \Sigma$ (this expression as the expectation of a square ensures that $\Sigma \geq 0$ holds).
Since it is expressed as an infimum over all functions,
$$
 \Theb{n} \leq  \frac{\ep h(S_n)^2}{n \ep 
\left( \Cn^* h \right)(Y)^2} -1  = \frac{2(n-1)}{\Sigma},$$ as required.
 \end{proof}

\begin{remark} We observe that:
\begin{enumerate}
\item Equation \eqref{eq:R2LB} shows that if $\Theb{2} > 0$ then $\Sigma < \infty$. Equivalently if $\Sigma = \infty$, we know $\Theb{2} = 0$ (and the Poincar\'{e} constant is infinite).
\item Note that  the values of $\Theb{2}$ found in Examples \ref{ex:hermite} and \ref{ex:laguerre} both satisfy \eqref{eq:R2LB} with equality, because the relevant eigenfunction $\ff{f}_2$ is quadratic. In the Gaussian case Example \ref{ex:hermite}, $\Sigma = 3 - 0 - 1 = 2$, consistent with the value $\Theta^{(2)} = 1$. In the gamma case Example \ref{ex:laguerre}, $\Sigma = (3 + 6/\beta) - 4/\beta - 1 = 2 + 2/\beta$, consistent with the value $\Theb{2} = \beta/(\beta + 1)$.
\item 
Note also that \eqref{eq:R2LB} means that if $\Sigma > 2$ (which, roughly speaking, corresponds to $Y$ having heavier tails than the Gaussian) then by \eqref{eq:RnLB} the $\Theb{2} < 1$ (smaller than the value in the Gaussian case, Example \ref{ex:hermite}).
\end{enumerate}
\end{remark}

Indeed, we can prove similar (if more involved) bounds which show that positivity of $\Theb{2}$ implies finiteness of all moments.  As before,the following proposition implies that if 
$\Theb{2} > 0$ and the $(2k-2)$th moment of $Y$ is finite then the $(2k)$th moment of $Y$ must be finite.

\begin{proposition} \label{prop:R2bound}
Writing $m_k = \ep Y^k$ for the $k$th moment of $Y$ and $\sigma^2$ for its variance, there exist functions $B_{1,k}$ and $B_{2,k}$ (depending on moments of lower orders) such that
$$ \Theb{2} \leq \frac{\sigma^2 m_{2k-2} + B_{1,k}(m_1, \ldots, m_{2k-1}) }{2 m_{2k}  + B_{2,k}(m_1, \ldots, m_{2k-2})}.$$ 
\end{proposition}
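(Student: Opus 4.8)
\medskip
\noindent\textbf{Proof plan.} The plan is to insert a degree-$k$ polynomial test function into the infimum characterization of $\Theb{2}$ in Definition~\ref{def:higher}, directly generalizing the degree-$2$ choice used to prove Lemma~\ref{lem:RnExample} (the case $k=2$) and motivated by the fact that the eigenfunction $\ff{f}_2$ is quadratic in Examples~\ref{ex:hermite} and~\ref{ex:laguerre}. I would take $h(s)=s^{k}-\alpha s-\beta$ with $\beta:=\ep S_2^{k}$ and $\alpha:=\ep S_2^{k+1}/(2\sigma^2)$; exactly as in Lemma~\ref{lem:RnExample}, these are forced by the constraints, since $\ep S_2=0$ and $\ep S_2^2=2\sigma^2$ give $\ep h(S_2)=\ep S_2^{k}-\beta=0$ and $\ep S_2\,h(S_2)=\ep S_2^{k+1}-2\sigma^2\alpha=0$. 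A short computation yields $(C_{(2)}^{*}h)(y)=\ep[(y+Y_2)^{k}]-\alpha y-\beta$, a \emph{monic} polynomial of degree $k$ in $y$, paralleling the appearance of $(C_{(2)}^{*}h)(y)=y^2-ay-\sigma^2$ in Lemma~\ref{lem:RnExample}.

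Next I would separate the ratio $\tfrac12\,\ep h(S_2)^2/\ep\!\big((C_{(2)}^{*}h)(Y)\big)^2$ into its two pieces. Since $(C_{(2)}^{*}h)(y)$ is monic of degree $k$, squaring and integrating gives $\ep\!\big((C_{(2)}^{*}h)(Y)\big)^2 = m_{2k} + (\text{a polynomial in strictly lower moments})$, the $m_{2k}$ coming only from the square of the leading term and the $m_{2k-1}$-term vanishing because the coefficient of $y^{k-1}$ is proportional to $m_1=0$; hence $2\,\ep\!\big((C_{(2)}^{*}h)(Y)\big)^2 = 2m_{2k}+B_{2,k}$, with $B_{2,k}$ depending only on $m_1,\dots,m_{2k-2}$ once $k\ge3$. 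For the numerator I would apply the Efron--Stein / ANOVA decomposition~\cite{efronstein} of $(Y_1,Y_2)\mapsto h(Y_1+Y_2)$ into its mean, the two main effects $(C_{(2)}^{*}h)(Y_1)$, $(C_{(2)}^{*}h)(Y_2)$, and an interaction term $\phi_{12}$; orthogonality of these components in $L^2$ gives $\ep h(S_2)^2 = 2\,\ep\!\big((C_{(2)}^{*}h)(Y)\big)^2 + \ep\phi_{12}^2$ (which re-proves $\Theb{2}\ge0$), so that
\begin{equation*}
\Theb{2}\;\le\;\tfrac12\,\frac{\ep h(S_2)^2}{\ep\!\big((C_{(2)}^{*}h)(Y)\big)^2}-1\;=\;\frac{\ep\phi_{12}^2}{2\,\ep\!\big((C_{(2)}^{*}h)(Y)\big)^2}.
\end{equation*}
The point that keeps $\phi_{12}$ tractable is that adding a linear function of $s$ and a constant to $h$ does not change its ANOVA interaction term --- both have vanishing interaction as functions of $(Y_1,Y_2)$ --- so $\phi_{12}$ equals the interaction term of $(y_1,y_2)\mapsto(y_1+y_2)^{k}$, which one computes to be $\phi_{12}(Y_1,Y_2)=\sum_{i=1}^{k-1}\binom{k}{i}(Y_1^{i}-m_i)(Y_2^{\,k-i}-m_{k-i})$, with no dependence on $\alpha$ or $\beta$.

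The last step is the moment bookkeeping. Expanding, by independence,
\begin{equation*}
\ep\phi_{12}^2=\sum_{i,j=1}^{k-1}\binom{k}{i}\binom{k}{j}\,(m_{i+j}-m_im_j)\,(m_{2k-i-j}-m_{k-i}m_{k-j}),
\end{equation*}
one checks that every moment occurring has order at most $2k-2$, that the term linear in $m_{2k-2}$ comes only from $(i,j)=(k-1,k-1)$ and its mirror $(i,j)=(1,1)$, and that its coefficient is a positive multiple of $\sigma^2=m_2$ (namely $2k^2\sigma^2$ for $k\ge3$, and $k^2\sigma^2$ when $k=2$). Writing the numerator as $\sigma^2 m_{2k-2}+B_{1,k}$ then merely pushes the surplus of this leading coefficient and all lower-order terms into $B_{1,k}$; combined with the expansion of the denominator this gives the stated inequality (and for $k=2$ recovers~\eqref{eq:R2LB}). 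I expect this bookkeeping to be the only real obstacle --- it is elementary but fiddly, and the care is needed precisely to confirm that nothing of order above $2k-2$ leaks into the numerator while the coefficient of $m_{2k}$ in the denominator stays exactly $2$, so the two halves of the bound behave as asserted; conceptually, the content is that $\phi_{12}$, hence the entire numerator, is a \emph{universal} polynomial in the low-order moments, with all of the dependence on $m_{2k}$ confined to the denominator. (When $\ep S_2^{2k}=\infty$ the same computation is run with bounded truncations $h_N$ of $h$: then $\ep\!\big((C_{(2)}^{*}h_N)(Y)\big)^2\to\infty$ while $\ep\phi_{12,N}^2\to\ep\phi_{12}^2<\infty$ remains controlled by $m_1,\dots,m_{2k-2}$, forcing $\Theb{2}=0$ --- consistent with the claimed bound and with the moment-finiteness consequence noted before the statement.)
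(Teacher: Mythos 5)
Your proposal is correct and takes essentially the same route as the paper: the same test function $h(s)=s^k-\alpha s-\beta$, the same two-variable decomposition $h(y_1+y_2)=(\Cn^*h)(y_1)+(\Cn^*h)(y_2)+U(y_1,y_2)$ (which is precisely the Efron--Stein decomposition you invoke), and the same moment bookkeeping yielding $\ep U^2 = \sum_{i,j=1}^{k-1}\binom{k}{i}\binom{k}{j}(m_{i+j}-m_im_j)(m_{2k-i-j}-m_{k-i}m_{k-j})$. The extra observations you make (linear shifts leaving the interaction term unchanged, the $k\ge 3$ versus $k=2$ casework, truncation when $m_{2k}=\infty$) are sound but not needed beyond what the paper's more terse algebra already records.
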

\begin{proof} Write
\begin{equation} \label{eq:mk2}
M_k = \ep(Y_1 + Y_2)^k = \sum_{l=0}^k \binom{k}{l} m_l m_{k-l} \end{equation} for the $k$th moment of $S_2$.
As in Lemma \ref{lem:RnExample}, consider the function $h(s) = s^k - a s - M_k$, where taking $a= M_{k+1}/(2 \sigma^2)$ ensures that $\ep h(S_2) S_2 = 0$.
Using \eqref{eq:mk2} we can expand
\begin{eqnarray} 
h(y_1 + y_2) & = &  (y_1 + y_2)^k  - a (y_1 + y_2) - M_k \nonumber \\
& = & (y_1^k - a y_1 - m_k) +   (y_2^k - a y_2 - m_k) + \sum_{l=1}^{k-1} \binom{k}{l} \left( y_1^l y_2^{k-l} - m_l m_{k-l} \right). \label{eq:hsumexp}
\end{eqnarray}
Substituting $y_2 = Y_2$ and taking expectations, we deduce that
\begin{eqnarray}
\Cn^* h(y) & = & \ep h(y + Y_2) =  (y^k - a y - m_k) +  \sum_{l=1}^{k-1} \binom{k}{l} \left( y^l  - m_l \right) m_{k-l},  
\end{eqnarray}
meaning that we can rewrite \eqref{eq:hsumexp} as
\begin{eqnarray*}
\lefteqn{ h(y_1 + y_2) } \\
 & = & \left( \Cn^* h(y_1) -  \sum_{l=1}^{k-1} \binom{k}{l} \left( y_1^l  - m_l \right) m_{k-l} \right)  + \left( C^* h(y_2) -  \sum_{l=1}^{k-1} \binom{k}{l} \left( y_2^l  - m_l \right) m_{k-l} \right)  \\
& & + \sum_{l=1}^{k-1} \binom{k}{l} \left( y_1^l y_2^{k-l} - m_l m_{k-l} \right) \\
& = & \Cn^* h(y_1) + \Cn^* h(y_2)  + U(y_1,y_2)
\end{eqnarray*} 
where $$ U(y_1,y_2) := \sum_{l=1}^{k-1} \binom{k}{l} \left( y_1^l - m_l \right) \left( y_2^{k-l} - m_{k-l} \right).$$
Since by construction $\ep \Cn^*h(Y)= 0$, we can deduce by independence of $Y_1$ and $Y_2$ that the cross terms vanish so that
\begin{equation} \label{eq:functionexpand} \ep h(Y_1+Y_2)^2 = 2 \ep \left( \Cn^* h(Y) \right)^2 + \ep U(Y_1,Y_2)^2,\end{equation}
so that 
$$ \Theb{2} \leq \frac{1}{2} \frac{ \ep h(Y_1+Y_2)^2}{  \ep \left( \Cn^* h(Y) \right)^2 }  -1 = \frac{ \ep U(Y_1,Y_2)^2}{ 2 \ep \left( \Cn^* h(Y) \right)^2 }.$$
We deduce the result using \eqref{eq:functionexpand} and the facts that $ \ep h(Y_1 + Y_2)^2 = M_{2k} - M_k^2 - M_{k+1}^2/\sigma^2$ and 
$$ \ep U(Y_1,Y_2)^2 =  \sum_{l,j = 1}^{k-1} \binom{k}{l} \binom{k}{j} \left( m_{l+j} - m_l m_j \right) \left( m_{2k-l-j} - m_{k-l} m_{k-j} \right).$$
\end{proof}

\begin{lemma} \label{lem:weaker}
Assuming $J(Y) < \infty$, the finiteness of the Poincar\'{e} constant $C_P := C_P(Y)$ implies that $\Theb{2} > 0$. Indeed: 
\begin{equation} \Theb{2} \geq \frac{1}{2 J(Y) C_P},
\end{equation}
\end{lemma}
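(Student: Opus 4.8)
The plan is to work with the variational formula for $\Theb{2}$ in Definition~\ref{def:higher}: it suffices to show that for every $h$ with $\ep h(S_2) = \ep[S_2 h(S_2)] = 0$ one has
\[
 \frac{\ep\left[ h(S_2)^2\right]}{\ep\left[\left( (C_{(2)}^* h)(Y)\right)^2\right]} \;\geq\; 2 + \frac{1}{J(Y)\,C_P},
\]
and then pass to the infimum. Fix such an $h$ and put $\phi := C_{(2)}^* h$, so that $\phi(y) = \ep[h(y+Y_2)]$; the two constraints on $h$ are exactly $\ep\,\phi(Y) = \ep[Y\phi(Y)] = 0$. Using the Efron--Stein (ANOVA) decomposition $h(Y_1+Y_2) = \phi(Y_1) + \phi(Y_2) + R(Y_1,Y_2)$, in which $\ep[R\mid Y_1] = \ep[R\mid Y_2] = 0$, orthogonality gives $\ep[h(S_2)^2] = 2\,\ep[\phi(Y)^2] + \ep[R(Y_1,Y_2)^2]$. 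So everything reduces to the single estimate
\[
 \ep[R(Y_1,Y_2)^2] \;\geq\; \frac{1}{J(Y)\,C_P}\,\ep[\phi(Y)^2].
\]

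The crux is bounding $\ep[R^2]$ \emph{from below}; a direct Poincar\'{e} inequality for $R$ bounds $\ep[R^2]$ \emph{above} by a Dirichlet energy and so is useless, and the way around this is to use that $h(y_1+y_2)$ is a ridge function. Applying $\partial_{y_1}-\partial_{y_2}$ to the definition of $R$ gives $\partial_1 R - \partial_2 R = \phi'(Y_2) - \phi'(Y_1)$. Pairing this with $\psi'(Y_1) - \psi'(Y_2)$ for a test function $\psi$, integrating by parts once in $y_1$ and once in $y_2$ (legitimate once $p_Y$ and $h$ are regular enough that boundary terms vanish), and discarding every term of the form $\ep[R\cdot(\text{function of a single }Y_i)]$ by $\ep[R\mid Y_i] = 0$, leaves the clean identity
\[
 \ep\!\left[ R(Y_1,Y_2)\,\psi'(Y_1)\,\varrho_Y(Y_2)\right] \;=\; -\,\cov\!\left(\phi'(Y),\,\psi'(Y)\right),
\]
where $\varrho_Y = p_Y'/p_Y$. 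Cauchy--Schwarz on the left, together with independence of $Y_1,Y_2$ (so $\ep[\psi'(Y_1)^2\varrho_Y(Y_2)^2] = \ep[\psi'(Y)^2]\,J(Y)$), gives $\ep[R^2]\ge \bigl(\cov(\phi',\psi')\bigr)^2/\bigl(J(Y)\,\ep[\psi'(Y)^2]\bigr)$; taking $\psi' = \phi' - \ep\,\phi'(Y)$, which is both admissible and optimal (again by Cauchy--Schwarz), collapses this to
\[
 \ep[R^2] \;\geq\; \frac{\var\phi'(Y)}{J(Y)}.
\]

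It then remains to prove $\var\phi'(Y)\ge \ep[\phi(Y)^2]/C_P$. This does not follow from Poincar\'{e} applied to $\phi$ (which produces $\ep[\phi'(Y)^2]$, not $\var\phi'(Y)$), so instead I would apply Poincar\'{e} to the tilted function $\wt{\phi}(y) := \phi(y) - \bigl(\ep\,\phi'(Y)\bigr)y$. Because $\ep\,\phi(Y) = \ep\,Y = 0$ we have $\ep\,\wt{\phi}(Y) = 0$, and $\wt{\phi}' = \phi' - \ep\,\phi'(Y)$ so that $\ep[\wt{\phi}'(Y)^2] = \var\phi'(Y)$; Poincar\'{e} gives $\var\wt{\phi}(Y)\le C_P\,\var\phi'(Y)$, while $\var\wt{\phi}(Y) = \ep[\phi(Y)^2] + \bigl(\ep\,\phi'(Y)\bigr)^2\sigma^2 \ge \ep[\phi(Y)^2]$ since the cross term $\ep[Y\phi(Y)]$ vanishes --- and this is precisely where the constraint $\ep[S_2 h(S_2)] = 0$ is used. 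Chaining the last two displays gives $\ep[R^2]\ge \ep[\phi(Y)^2]/(J(Y)C_P)$, hence $\Theb{2}\ge 1/(2J(Y)C_P)$.

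The step I expect to be the genuine obstacle is the integration-by-parts derivation of the bilinear identity: it needs enough smoothness and decay (of $p_Y$, of $h$, and of $\phi'$) for the boundary terms to vanish, so one should first check that the infimum in Definition~\ref{def:higher} is unaffected by restricting $h$ to a suitably regular dense class. Everything else is bookkeeping, but one must verify at each step that Cauchy--Schwarz and the Poincar\'{e} input are applied in the direction that yields a \emph{lower} bound on $\ep[R^2]$, since the naive estimates all go the other way.
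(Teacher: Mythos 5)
Your argument is correct and follows the same plan as the paper's proof: decompose $\ep h(S_2)^2 = 2\,\ep\phi(Y)^2 + \ep R^2$ via Efron--Stein and then show $\ep R^2 \geq \frac{1}{J(Y)C_P}\ep\phi(Y)^2$. The difference is that the paper obtains the residual lower bound by simply citing \cite[Proposition 2.1]{johnson5} and observing that the two constraints force $\nu = \mu\,\ep Y = 0$ and $\ep[Y\phi(Y)]=0$, so that $\ep\left(\phi(Y)-\mu Y-\nu\right)^2 \geq \ep\phi(Y)^2$. You instead re-derive the cited inequality from scratch: the bilinear identity $\ep\left[R\,\psi'(Y_1)\,\varrho_Y(Y_2)\right] = -\cov\!\left(\phi'(Y),\psi'(Y)\right)$, obtained by integration by parts in each variable together with $\ep[R\mid Y_i]=0$, then Cauchy--Schwarz, the optimal test function $\psi'=\phi'-\ep\phi'$, and a Poincar\'{e} inequality applied to the linearly tilted $\wt{\phi}(y)=\phi(y)-\left(\ep\phi'(Y)\right)y$. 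I checked these steps and they are sound; in particular, your observation that $\ep[S_2h(S_2)]=0$ is exactly what forces $\ep[Y\phi(Y)]=0$ and hence $\var\wt{\phi}(Y)\geq\ep\phi(Y)^2$ matches the role this constraint plays in the paper's argument. The regularity caveat you flag --- vanishing boundary terms in the integrations by parts, which require smoothness and decay of $p_Y$, $h$ and $\phi'$ --- is genuine; the paper sidesteps it by delegating the analytic content to the cited proposition. In short, this is the same approach with the black box opened rather than invoked.
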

\begin{proof}
We can deduce this using  \cite[Proposition 2.1]{johnson5} which, for $Y_1$ and $Y_2$ i.i.d., gives that for any $f$ with $\ep f(S_2) = \ep f(S_2) S_2 = 0$ and
taking $g(u) = \ep f(u+ Y)$ then
\begin{equation} \label{eq:johnsonbarron}
 \ep \left( f(Y_1 + Y_2) - g(Y_1) - g(Y_2) \right)^2
\geq \frac{1}{J(Y) C_P} \ep \left( g(Y_1) - \mu Y_1 - \nu \right)^2
\end{equation}
for some $\mu$, $\nu$.
The proof of \cite[Proposition 2.1]{johnson5} states that $\nu = \mu \ep Y_1  = 0$. Further, by symmetry, the condition $\ep f(S_2) S_2 = 0$ implies that $0 = \ep f(Y_1 +Y_2) Y_1 = \ep g(Y_1) Y_1$, so the RHS of \eqref{eq:johnsonbarron} is $ \geq \frac{1}{J(Y) C_P} \ep \left( g(Y) \right)^2$. Rearranging, we deduce that
\begin{equation}
\ep f(S_2)^2 \geq \left( 2+ \frac{1}{J(Y) C_P} \right) \ep g(Y)^2, \end{equation}
and the result follows on rearranging.
\end{proof}

\section{Behaviour of the Fisher information on convolution} \label{sec:scoreconv}

We now consider how the standardized Fisher information behaves on convolution, under a standard Central Limit Theorem scaling.
That is, as in \cite{courtade2016}, we write $U_n = S_n/\sqrt{n}$.
Note that in the i.i.d. regime, since $J(c X) = J(X)/c^2$ (see e.g. \cite[Eq. (2.3)]{brown}) we know that $\Jst(U_n)  = \sigma^2 J(U_n) - 1 = \sigma^2 J(S_n/\sqrt{n}) - 1 = \sigma^2 n J(S_n) - 1$ (scale-invariance of $\Jst$).

\begin{proposition} \label{prop:fishbd1}
For i.i.d. $Y_1, \ldots, Y_n$ the standardized Fisher information satisfies
\begin{equation} \Jst(U_n) \leq \frac{1}{1 + \Theb{n}}\Jst(Y).\label{eq:fishUB} \end{equation}
\end{proposition}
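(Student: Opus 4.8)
The plan is to decompose the score function of $U_n$ (equivalently, the projected score we actually care about) in the orthonormal eigenbasis $(\ff{f}_k)$ of $\Cn^*\Cn$, and to read off the standardized Fisher information as a weighted sum of squared coefficients, with the weights governed by the singular values $\fn{\mu}_k$. The starting point is the standard observation (going back to Stam, Blachman and Brown, and used in \cite{johnson5, courtade2016}) that the score of a sum is a conditional expectation of the summand scores: $\varrho_{S_n}(s) = \ep[\varrho_Y(Y_1)\mid S_n = s] = (\Cn \varrho_Y)(s)$. Writing the standardized score $\wt{\varrho}_Y(y) := \varrho_Y(y) + y/\sigma^2$, since $\ff{f}_1(y)=y/\sigma$ has $\Cn \ff{f}_1 = \fn{\mu}_1 \fn{g}_1$ with $\fn{\mu}_1 = 1/\sqrt n$, the map $\Cn$ carries the standardized score of $Y$ to (a multiple of) the standardized score of $S_n$: $\Cn \wt{\varrho}_Y = \wt{\varrho}_{S_n}$ up to the scaling bookkeeping between $S_n$ and $U_n$, so that $\Jst(U_n)$ is (a constant times) $\|\Cn \wt{\varrho}_Y\|_{\pr_{S_n}}^2$ while $\Jst(Y) = \sigma^2\|\wt{\varrho}_Y\|^2_{\pr_Y}$ in the appropriate normalization.

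Next I expand $\wt{\varrho}_Y = \sum_{k\geq 2} c_k \ff{f}_k$ in $L^2(\pr_Y)$: the component along $\ff{f}_0\equiv 1$ vanishes because $\ep\wt{\varrho}_Y(Y) = \ep\varrho_Y(Y) + \ep Y/\sigma^2 = 0$ (the first term is zero by integration by parts for an absolutely continuous density), and the component along $\ff{f}_1(y)=y/\sigma$ vanishes because $\ep \wt{\varrho}_Y(Y)\, Y/\sigma = \ep\varrho_Y(Y)Y/\sigma + \ep Y^2/\sigma^3 = -1/\sigma + \sigma/\sigma^2\cdot\sigma = 0$ (again integration by parts, giving $\ep\varrho_Y(Y)Y = -1$). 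Hence the expansion really does run over $k\geq 2$. Then $\Cn\wt{\varrho}_Y = \sum_{k\geq 2} c_k \fn{\mu}_k \fn{g}_k$, and by orthonormality of the $\fn{g}_k$ in $L^2(\pr_{S_n})$ we get
\begin{equation}
\|\Cn\wt{\varrho}_Y\|^2_{\pr_{S_n}} = \sum_{k\geq 2} c_k^2 (\fn{\mu}_k)^2 = \sum_{k\geq 2} c_k^2 \fn{\lambda}_k \leq \fn{\lambda}_2 \sum_{k\geq 2} c_k^2 = \fn{\lambda}_2 \|\wt{\varrho}_Y\|^2_{\pr_Y},
\end{equation}
using $\fn{\lambda}_2 \geq \fn{\lambda}_k$ for all $k\geq 2$. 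Finally I translate this back into the $\Jst$ normalization. Since $\Jst(U_n) = \sigma^2 n J(S_n) - 1$ and one checks that $\sigma^2 n \ep\varrho_{S_n}(S_n)^2 - 1 = n\,\ep\big((\Cn\wt{\varrho}_Y)(S_n)\big)^2 \cdot \sigma^2 / \sigma^2 $ after accounting for the cross term with the linear part; concretely, $\Jst(U_n)$ equals $\sigma^2 n$ times $\ep(\Cn\wt{\varrho}_Y)^2$ in $L^2(\pr_{S_n})$, while $\Jst(Y) = \sigma^2 \ep\wt{\varrho}_Y(Y)^2$. Combining with the displayed inequality and $n\fn{\lambda}_2 = 1/(1+\Theb{n})$ from Definition \ref{def:higher} yields $\Jst(U_n) \leq \sigma^2 n \fn{\lambda}_2 \cdot \ep\wt{\varrho}_Y^2 = \tfrac{n\fn{\lambda}_2}{1}\Jst(Y) = \tfrac{1}{1+\Theb{n}}\Jst(Y)$, which is \eqref{eq:fishUB}.

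The main obstacle I anticipate is not the eigenvalue inequality — that is immediate once the expansion is set up — but the justification that the standardized score $\wt{\varrho}_Y$ genuinely lies in $L^2(\pr_Y)$ and that the term-by-term action of $\Cn$ on its (possibly infinite) expansion is legitimate, i.e. that $\Cn$ is bounded (a contraction, in fact) on $L^2$ and commutes with the $L^2$-limit. Boundedness of conditional expectation is standard, so this should go through; one must also be slightly careful that $\Cn\wt{\varrho}_Y$ really is the standardized score of $S_n$ rather than merely its $L^2(\pr_{S_n})$-projection onto ridge functions — but since $\varrho_{S_n}$ is itself a function of $S_n$, the conditional expectation identity $\varrho_{S_n} = \Cn\varrho_Y$ is exact, not a projection, and the subtraction of the linear part is handled by the coefficient bookkeeping above. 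A secondary point is bookkeeping of the scaling $U_n = S_n/\sqrt n$ versus $S_n$, but since $\Jst$ is scale-invariant this only affects constants and not the final bound.
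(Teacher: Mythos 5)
Your proposal is correct and follows essentially the same route as the paper: identify $\varrho_{S_n}=\Cn\varrho_Y$, expand the standardized score in the eigenbasis $(\ff f_k)$, observe the $k=0,1$ coefficients vanish (the paper computes $a_1=-1/\sigma$ and subtracts the linear piece; you equivalently work with $\wt\varrho_Y=\varrho_Y + y/\sigma^2$ directly), then bound $\sum_{k\ge2}c_k^2\fn\lambda_k\le\fn\lambda_2\sum_{k\ge2}c_k^2$ and convert via $n\fn\lambda_2=1/(1+\Theb n)$ and scale-invariance of $\Jst$. The concluding bookkeeping identity you gesture at — $n\sigma^2\,\ep\wt\varrho_{S_n}^2=\sigma^2 n\,\ep\varrho_{S_n}^2-1=\Jst(U_n)$ — does check out, using $\ep[\varrho_{S_n}S_n]=-1$ and $\ep S_n^2=n\sigma^2$.
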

\begin{proof}
Observe that (see for example \cite[Eq. (3)]{courtade2016}, \cite{stam}) that the score function of the sum satisfies 
\begin{equation} \label{eq:scoresum} \varrho_{S_n}(s) = \ep \left[ \varrho_{Y}(Y_i) | S_n = s \right],\end{equation}
which we can rewrite as  $\varrho_{S_n} = (\Cn \varrho_Y)$.
Hence if we expand the score function as a sum of eigenfunctions
$$ \varrho_Y = \sum_{k=1}^\infty a_k \ff{f}_k$$ then
Definition \ref{def:eigenfunctions} gives that:
\begin{equation} \label{eq:scoresumexp} \varrho_{S_n} = \sum_{k=1}^\infty a_k (\Cn \ff{f}_k)
= \sum_{k=1}^\infty a_k \fn{\mu}_k \fn{g}_k. \end{equation}

Further, direct calculation using integration by parts gives that 
$$ a_1 = \left\langle \varrho_Y, \ff{f}_1 \right\rangle_{\pr_Y} = \int p_{Y}(y) \frac{ p'_{Y}(y)}{p_{Y}(y)} \frac{y}{\sigma} dy
= - \int p_{Y}(y) \frac{1}{\sigma} = - \frac{1}{\sigma}.$$
This means that, using the fact that (see Remark \ref{rem:eigen}.\ref{it:eigen1}) the $\ff{f}_1(x) = x/\sigma$, $\fn{g}_1(y) = y/(\sigma \sqrt{n})$ with $\fn{\mu}_1 = 1/\sqrt{n}$
we can write the standardized score functions of $Y$ and $S_n$ from
\eqref{eq:jstmain} as sums of eigenfunctions starting at index 2, as:
\begin{eqnarray}
\varrho_Y(y) + \frac{y}{\sigma^2} & = & \sum_{k=2}^\infty a_k \ff{f}_k(y), \\
\varrho_{S_n}(s) + \frac{s}{n \sigma^2} 
& = & \sum_{k=2}^\infty a_k \fn{\mu}_k \fn{g}_k(s).
\end{eqnarray}
Then, direct calculation using the orthonormality of $\ff{f}$ and $\fn{g}$ gives that:
\begin{eqnarray} 
  \Jst(U_n)  =  \Jst(S_n) 
& = & (\sigma^2  n) \ep \left( \varrho_{S_n}(S_n) + \frac{S_n}{n \sigma^2} \right)^2 \nonumber \\
& = & (\sigma^2  n) \left( \sum_{k=2}^\infty a_k^2 \left( \fn{\mu}_k \right)^2 \right)   \nonumber \\
& \leq & \frac{1}{1+\Theb{n}} \sigma^2 \left( \sum_{k=2}^\infty a_k^2 \right) \nonumber \\
& = & \frac{1}{1+ \Theb{n}} \sigma^2 \ep \left( \varrho_Y(Y) + \frac{Y}{\sigma^2} \right)^2 \nonumber \\
& = &  \frac{1}{1+\Theb{n}} \Jst(Y),
\label{eq:fisherdrop2}
\end{eqnarray}
using the fact that $n \left( \fn{\mu}_k \right)^2 = n \fn{\lambda}_k \leq  1/(1+\Theb{n})$ for $k \geq 2$ by \eqref{eq:higher}.
\end{proof}

We can use a similar argument to prove the lower bound on Fisher information, Lemma \ref{lem:fishbd3}:

\begin{proof}[Proof of Lemma \ref{lem:fishbd3}]
As in Lemma \ref{lem:RnExample} consider the function $h(s) = s^2 - a s - n \sigma^2$ where $a= \ep Y^3/\sigma^2 = \skewn \sigma$. As above, since $\ep h(S_n) S_n = 0$
\begin{equation} \label{eq:findinnprod}
\left\langle \varrho_{S_n} + \frac{S_n}{n \sigma^2}, h \right\rangle_{\pr_{S_n}}
= \left\langle \varrho_{S_n}, h \right\rangle_{\pr_{S_n}}
= \int p_{S_n}(y) \frac{ p'_{S_n}(y)}{p_{S_n}(y)} h(y) dy
= - \int p_{S_n}(y) h'(y) = \skewn \sigma.\end{equation}
Now considering the LHS of \eqref{eq:findinnprod} using Cauchy-Schwarz, we deduce that
\begin{equation}
\left( \skewn \sigma \right)^2 \leq \left( \ep \left( \varrho_{S_n} + \frac{S_n}{n \sigma^2} \right)^2 \right)   \left( \ep h(S_n)^2 \right)^2 = \frac{\Jst(S_n)}{n \sigma^2}  \left( n \sigma^4 \Sigma + 2 n(n-1) \sigma^4 \right).
\end{equation}
since as before $\ep h(S_n)^2 = n \sigma^4 \Sigma + 2 n(n-1) \sigma^4$, and the result follows by rearrangement.
\end{proof}

This lower bound tightens \cite[Lemma 1.4]{johnson5}, which (in our notation) can be expressed as $$ \Jst(U_n) \geq \frac{\skewn^2}{\Sigma + \skewn^2 + 2(n-1)}, $$ where the original result is expressed in terms of the excess kurtosis $k = \ep Y^4/\sigma^4 -3  = \Sigma + \skewn^2 - 2$.

\section{Higher order Dembo--Kagan--Shepp terms} \label{sec:DKShigher}

Proposition \ref{prop:fishbd1} gives one part of the proof of Theorem \ref{thm:main}. However, this result as stated is not particularly helpful, since the form of the dependence of $\Theb{n}$ on $n$ is not immediately clear. We complete the proof of Theorem \ref{thm:main} by proving Proposition \ref{prop:DKShigher} below, which allows us to control $\Theb{n}$.

The key observation is that we can analyse higher order terms in the Dembo--Kagan--Shepp argument, following the proof of \cite[Lemma 2]{DKS2001}. 

\begin{lemma} \label{lem:DKShigher}
Fix $k > \ell \geq 2$, and consider a function $h$ with $\ep h(S_k) = 0$. Then
$$ \ep h(S_k)^2 \geq k \ep h_1(Y)^2 + \frac{k(k-1)}{\ell(\ell-1)}
\left( \ep \widehat{h}(S_\ell)^2 - \ell \ep h_1(Y)^2 \right),$$
where $h_1(u) = \ep h( u + Y_2 + \ldots + Y_k)$ and $\widehat{h}(v) 
= \ep  h(v + Y_{\ell+1} + \ldots + Y_k)$.
\end{lemma}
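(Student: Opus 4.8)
The plan is to mimic the Efron--Stein / Hoeffding decomposition argument that underlies \cite[Lemma 2]{DKS2001}, but carried out one level deeper. Write $h$ as a function on $\re^k$ via $h(S_k) = h(Y_1 + \cdots + Y_k)$ and apply the ANOVA (Hoeffding) decomposition of this symmetric function of the i.i.d.\ coordinates $Y_1, \ldots, Y_k$: namely $h(S_k) = \sum_{A \subseteq \{1,\ldots,k\}} h_A$, where $h_A$ depends only on $\{Y_i : i \in A\}$, has mean zero for $A \neq \emptyset$, and the components are orthogonal in $L^2$. By symmetry, the norm $\ep h_A^2 =: v_{|A|}$ depends only on $|A|$, so $\ep h(S_k)^2 = \sum_{j=1}^k \binom{k}{j} v_j$ (the $j=0$ term vanishes since $\ep h(S_k)=0$). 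The point is that since $h(S_k)$ is actually a function of the \emph{sum}, its conditional expectations telescope cleanly: I would show that $h_1(u) = \ep[h(S_k) \mid Y_1 = u]$ picks out exactly the degree-$\leq 1$ part evaluated on one coordinate, so that $\ep h_1(Y)^2 = v_1$ (the degree-$0$ part being zero), and more generally $\widehat h(S_\ell) = \ep[h(S_k)\mid Y_1,\ldots,Y_\ell]$ restricted appropriately has $\ep \widehat h(S_\ell)^2 = \sum_{j=1}^{\ell} \binom{\ell}{j} v_j$.

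Granting those two identities, the lemma becomes a purely combinatorial inequality among the nonnegative numbers $v_j$. We must show
$$
\sum_{j=1}^k \binom{k}{j} v_j \;\geq\; k\, v_1 + \frac{k(k-1)}{\ell(\ell-1)}\left( \sum_{j=1}^\ell \binom{\ell}{j} v_j - \ell\, v_1 \right)
= k v_1 + \frac{k(k-1)}{\ell(\ell-1)} \sum_{j=2}^\ell \binom{\ell}{j} v_j .
$$
Cancelling the $k v_1 = \binom{k}{1} v_1$ term on both sides, it suffices to prove, term by term in $j$ for $2 \le j \le \ell$, that
$$
\binom{k}{j} \;\geq\; \frac{k(k-1)}{\ell(\ell-1)} \binom{\ell}{j},
$$
and then simply drop the (nonnegative) remaining terms $\binom{k}{j} v_j$ for $\ell < j \le k$ on the left. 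This last binomial inequality is elementary: both sides equal $\frac{k(k-1)}{\ell(\ell-1)}\binom{\ell}{j}$ versus $\binom{k}{j}$, and writing $\binom{k}{j}\big/\binom{\ell}{j} = \prod_{i=0}^{j-1}\frac{k-i}{\ell-i}$, the first two factors ($i=0,1$) give exactly $\frac{k(k-1)}{\ell(\ell-1)}$, while each remaining factor $\frac{k-i}{\ell-i} \geq 1$ since $k \geq \ell > i$. So the inequality holds with room to spare.

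The main obstacle is the first step: rigorously justifying that the conditional expectation $\ep[h(S_k)\mid Y_1,\ldots,Y_\ell]$, as a function of the partial sum $S_\ell = Y_1 + \cdots + Y_\ell$, has the stated $L^2$-norm $\sum_{j=1}^\ell \binom{\ell}{j} v_j$ in terms of the ANOVA components of $h(S_k)$ \emph{on all $k$ coordinates}. The key facts needed are that conditioning a mean-zero ANOVA component $h_A$ on $\{Y_i : i \in B\}$ yields $h_{A\cap B}$ if we further condition down (more precisely $\ep[h_A \mid Y_B] = h_A$ if $A \subseteq B$ and has mean zero hence $0$ otherwise, but here the subtlety is that $h_A$ is itself defined through sums), and that these projected pieces remain orthogonal, so norms add in Pythagorean fashion. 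One has to be slightly careful that $\widehat h$ is genuinely a function of $S_\ell$ alone (true because $h(S_k)=h(S_\ell + (Y_{\ell+1}+\cdots+Y_k))$ and we integrate out the independent tail) and that the expansion of $\ep \widehat h(S_\ell)^2$ only involves $v_1,\ldots,v_\ell$ with the binomial multiplicities from subsets of an $\ell$-element index set. Once this bookkeeping is in place — which is exactly the Efron--Stein decomposition invoked in the surrounding text — the rest is the elementary binomial estimate above.
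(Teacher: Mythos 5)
Your proposal follows the same route as the paper: the Efron--Stein/ANOVA decomposition $\ep h(S_k)^2 = \sum_{r=1}^k \binom{k}{r}\ep h_r^2$ together with $\ep \widehat h(S_\ell)^2 = \sum_{r=1}^\ell \binom{\ell}{r}\ep h_r^2$ (the shared $h_r$'s being exactly the bookkeeping point you flag), followed by the termwise binomial inequality $\binom{k}{r} \geq \tfrac{k(k-1)}{\ell(\ell-1)}\binom{\ell}{r}$ for $r \geq 2$ and dropping the remaining nonnegative terms. This matches the paper's argument, which imports the two decomposition identities from \cite[Eqs.~(14),(15)]{DKS2001} rather than rederiving them.
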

\begin{proof}
We adopt the same notation as \cite[Section 2]{DKS2001}.
As in \cite[Eq. (14), (15)]{DKS2001}, we can perform an Efron--Stein (ANOVA) expansion \cite{efronstein} of $h$ and $\widehat{h}$ (using the same functions $h_i$ in each case) to obtain
\begin{eqnarray}
\ep h(S_k)^2 & = & \sum_{r=1}^k \binom{k}{r} \ep h_r^2(Y_1, \ldots, Y_r)
\label{eq:ES1} \\
\ep \widehat{h}(S_\ell)^2 & = & \sum_{r=1}^\ell \binom{\ell}{r} \ep h_r^2(Y_1, \ldots, Y_r)
\label{eq:ES2}
\end{eqnarray}
 The key observation is that for any $k > \ell \geq 2$ and any $r \geq 2$, direct comparison of the two terms gives
$$ \frac{ \ell(\ell-1)}{k(k-1)} \binom{k}{r} \geq \binom{\ell}{r},$$
with equality if and only if $r = 2$. Applying this to the Efron--Stein decompositions \eqref{eq:ES1} and \eqref{eq:ES2} we obtain
\begin{eqnarray}
\ep h(S_k)^2 & = & k \ep h_1(Y)^2
+\sum_{r=2}^k \binom{k}{r} \ep h_r^2(S_r) \nonumber \\
& \geq & k \ep h_1(Y)^2
+ \frac{k(k-1)}{\ell(\ell-1)} \left( \sum_{r=2}^\ell \binom{\ell}{r} \ep h_r^2(S_r) \right) \nonumber \\
& \geq & k \ep h_1(Y)^2 + \frac{k(k-1)}{\ell(\ell-1)}
\left( \ep \widehat{h}(S_\ell)^2 - \ell \ep h_1(Y)^2 \right)
 \label{eq:DKS2}
\end{eqnarray}
as required.
\end{proof}

We now deduce a result which, when combined with Proposition \ref{prop:fishbd1} above, allows us to deduce the proof of Theorem \ref{thm:main}:

\begin{proposition} \label{prop:DKShigher}
The quantity $\Theb{k}/(k-1)$ is non-decreasing in $k$. Specifically for any $n \geq 2$:
\begin{equation} \label{eq:LamBD}
 \Theb{n} \geq (n-1) \Theb{2}.\end{equation}
\end{proposition}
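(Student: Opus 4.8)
The plan is to prove the inequality $\Theb{n} \geq (n-1)\Theb{2}$ directly from Lemma~\ref{lem:DKShigher} by specializing to $k = n$ and $\ell = 2$, and then to recover the monotonicity of $\Theb{k}/(k-1)$ as essentially the same argument with general $k > \ell$. Recall from Definition~\ref{def:higher} that $\Theb{n} + 1 = \frac{1}{n\fn{\lambda}_2}$, where $\fn{\lambda}_2$ is the second non-trivial eigenvalue of $\Cn^*\Cn$, and that this has a variational characterization as an infimum over functions $h$ with $\ep h(S_n) = \ep S_n h(S_n) = 0$ of the ratio $\frac{1}{n}\frac{\ep h(S_n)^2}{\ep (\Cn^* h)(Y)^2}$. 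In the notation of Lemma~\ref{lem:DKShigher}, $(\Cn^* h)(y) = h_1(y) = \ep h(y + Y_2 + \ldots + Y_n)$, so the quantity $\Theb{n} + 1$ is the infimum of $\frac{\ep h(S_n)^2}{n\,\ep h_1(Y)^2}$ over the relevant constrained class.

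First I would fix an arbitrary $h$ in the constraint class for $S_n$ (so $\ep h(S_n) = \ep S_n h(S_n) = 0$) and apply Lemma~\ref{lem:DKShigher} with $k = n$, $\ell = 2$. This gives
$$\ep h(S_n)^2 \geq n\, \ep h_1(Y)^2 + \frac{n(n-1)}{2}\left(\ep \widehat{h}(S_2)^2 - 2\,\ep h_1(Y)^2\right),$$
where $\widehat{h}(v) = \ep h(v + Y_3 + \ldots + Y_n)$. The key point is that $\widehat{h}$ inherits the right constraints to be admissible for the $\Theb{2}$ variational problem: $\ep \widehat{h}(S_2) = \ep h(S_n) = 0$ automatically, and $\ep S_2 \widehat{h}(S_2) = \ep S_2\, h(S_n)$; by exchangeability of the $Y_i$ this equals $\frac{2}{n}\ep S_n h(S_n) = 0$. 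Hence, by the definition of $\Theb{2}$ as an infimum, $\ep \widehat{h}(S_2)^2 \geq 2(\Theb{2} + 1)\,\ep \widehat{h}_1(Y)^2$, where $\widehat{h}_1(y) = \ep\widehat{h}(y + Y_2) = \ep h(y + Y_2 + \ldots + Y_n) = h_1(y)$ — the ANOVA first-order term is the same. So $\ep \widehat{h}(S_2)^2 \geq 2(\Theb{2}+1)\,\ep h_1(Y)^2$.

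Substituting this lower bound into the displayed inequality yields
$$\ep h(S_n)^2 \geq n\,\ep h_1(Y)^2 + \frac{n(n-1)}{2}\left(2(\Theb{2}+1) - 2\right)\ep h_1(Y)^2 = n\left(1 + (n-1)\Theb{2}\right)\ep h_1(Y)^2,$$
and dividing through by $n\,\ep h_1(Y)^2$ gives $\frac{\ep h(S_n)^2}{n\,\ep h_1(Y)^2} \geq 1 + (n-1)\Theb{2}$. Taking the infimum over admissible $h$ gives $\Theb{n} + 1 \geq 1 + (n-1)\Theb{2}$, i.e.\ \eqref{eq:LamBD}. For the full monotonicity claim that $\Theb{k}/(k-1)$ is non-decreasing, I would run the identical argument with general $k > \ell \geq 2$: Lemma~\ref{lem:DKShigher} plus $\ep\widehat{h}(S_\ell)^2 \geq \ell(\Theb{\ell}+1)\ep h_1(Y)^2$ gives $\ep h(S_k)^2 \geq k\ep h_1(Y)^2 + \frac{k(k-1)}{\ell(\ell-1)}(\ell\Theb{\ell})\ep h_1(Y)^2$, hence $\Theb{k} \geq \frac{k-1}{\ell-1}\Theb{\ell}$, which is the stated monotonicity.

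The main obstacle, and the step needing the most care, is verifying that $\widehat{h}$ is genuinely admissible for the $\Theb{\ell}$ problem — specifically the second constraint $\ep S_\ell \widehat{h}(S_\ell) = 0$ — and that its first-order ANOVA component coincides with $h_1$, so that the bound from $\Theb{\ell}$ can be plugged in cleanly without introducing a mismatched normalization. Both facts follow from exchangeability of the i.i.d.\ sequence and the tower property, but they must be stated explicitly since the whole chain of inequalities collapses if the denominators $\ep h_1(Y)^2$ are not literally the same object at each stage. Everything else is the bookkeeping already done in Lemma~\ref{lem:DKShigher}.
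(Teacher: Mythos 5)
Your proof is correct and follows essentially the same route as the paper: apply Lemma~\ref{lem:DKShigher} with $k=n$, $\ell=2$, observe that $h_1 = C^{(\ell)*}\widehat{h}$ so the same $\ep h_1(Y)^2$ appears as the denominator in both the $\Theb{n}$ and $\Theb{\ell}$ variational problems, substitute the $\Theb{\ell}$ bound, and optimize over $h$. You are slightly more careful than the paper in explicitly verifying that $\widehat{h}$ satisfies the admissibility constraints $\ep\widehat{h}(S_\ell)=\ep S_\ell\widehat{h}(S_\ell)=0$ before invoking the infimum defining $\Theb{\ell}$, which the paper leaves implicit; that extra check is a worthwhile addition but does not change the argument.
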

\begin{proof}
The key fact is that the function $h_1$ arising in Lemma \ref{lem:DKShigher} can be understood as the conditional expectation of both $h$ and $\widehat{h}$ (this is remarked at the foot of \cite[P.345]{DKS2001}, and is due to orthogonality of the Efron--Stein decomposition). That is, for $k > \ell \geq 2$ we can write
\begin{equation}
h_1(u)  =   \ep h(u+ Y_2 + \ldots + Y_k) = \left\{ 
\mbox{
\begin{tabular}{llll}
$C^{(k)*} h(u)$ \\
$\ep \widehat{h}(u + Y_2+ \ldots + Y_\ell)$ & $= C^{(\ell)*} \widehat{h}(u)$ \\
\end{tabular}}
\right.
\end{equation}
since $\widehat{h}(v) = \ep h(v+ Y_{\ell+1} + \ldots + Y_k)$. Hence for any $h$ (and hence $\widehat{h}$ and $h_1$) we can write
\begin{equation} \label{eq:variation}
\Theb{\ell} \leq  \frac{ \ep \widehat{h}(S_\ell)^2}{\ell \ep h_1(Y_1)^2} - 1,
\end{equation}
 so the RHS of \eqref{eq:DKS2} becomes 
$$ \ep h(S_k)^2 \geq
k \ep h_1(Y)^2 \left( 1 + \frac{(k-1)}{(\ell-1)} \Theb{\ell}  \right), $$
or dividing by $k \ep h_1(Y)^2$ and taking the optimal $h$:
$$ \Theb{k} \geq \frac{k-1}{\ell-1} \Theb{\ell},$$
and the result \eqref{eq:LamBD} follows on taking $k = n$ and $\ell = 2$. \end{proof}

Note that we can weaken the assumption that $\Theb{2} > 0$ to ensure $O(1/n)$ convergence of Fisher information, to simply require that $\Theb{m} > 0$ for some $m$. If this is true, we can simply replace \eqref{eq:LamBD} by a bound of the form $\Theb{n} \geq (n-1) \Theb{m}/(m-1)$ and substitute this in Proposition \ref{prop:fishbd1} instead.

\begin{example}
In the Gaussian and gamma cases of Examples \ref{ex:hermite} and \ref{ex:laguerre} the result of Proposition \ref{prop:DKShigher} is sharp. That is, if $Y_i \sim N(0,\sigma^2)$ then recall that $\Theb{k} = k-1$ and hence $\Theb{k}/(k-1)  \equiv 1$. Similarly if $Y_i \sim \Gamma(\beta,1)$ then $\Theb{k} = (k-1) \beta/(\beta+1)$ and $\Theb{k}/(k-1) \equiv \beta/(\beta+1)$.

This sharpness holds because in both Example \ref{ex:hermite} and \ref{ex:laguerre} the optimal eigenfunction is quadratic, so in the Efron--Stein decomposition the $h_3 = h_4 = \ldots = 0$.
\end{example}

\begin{remark} By combining Proposition \ref{prop:DKShigher} with Equation \eqref{eq:RnLB}  we can deduce that $\Theb{n}$ is bounded above and below by linear functions in $(n-1)$, assuming $\Theb{2}$ and $\Sigma$ are non-zero, as
$$  (n-1) \Theb{2} \leq \Theb{n}
\leq  (n-1) \frac{2}{\Sigma}.$$
\end{remark}

\begin{remark} \label{rem:unique}
Although not mentioned in \cite{DKS2001}, similar arguments show that under regularity conditions there should be a unique eigenfunction achieving eigenvalue $1/n$ (we know from Example \ref{rem:eigen}.\ref{it:iid} above that the linear functions achieve this). That is, assuming $k > \ell$ there is equality in
$$ \frac{\ell}{k} \binom{k}{r} \geq \binom{\ell}{r}$$
if and only if $r = 1$. Hence there is equality in \cite[Lemma 2]{DKS2001} if and only if $\ep h_2^2 = \ep h_3^2 = \ldots = 0$. Hence except on a set of measure 0 we know that
$$ h( y_1 + \ldots + y_k) = \sum_{i=1}^k h_1(y_i).$$
Assuming $h$ is twice differentiable then taking a derivative with respect to $y_1$ and $y_2$ this implies $h''(z) \equiv 0$ for all $z$, so $h$ is linear.
\end{remark}

\section{Strengthened monotonicity} \label{sec:strengthen}

We can extend the arguments above to deduce a stronger form of
monotonicity of Fisher information than that obtained by \cite{artstein} and \cite{courtade2016}, at least in the i.i.d. case:

\begin{definition} \label{def:higher2} For $m \leq n$, define $\CnG{n}{m}^*$ by
$\left( \CnG{n}{m}^* g \right)(y) = \ep \left[ g(S_n) | S_m = y \right]$, and write $\fnG{\lambda}{n}{m}_k$ for the ordered eigenvalues of $\CnG{n}{m}^* \CnG{n}{m}$, where $\fnG{\lambda}{n}{m}_0 = 1$ and (by DKS \cite{DKS2001} \eqref{eq:DKS}) $\fnG{\lambda}{n}{m}_1 = m/n$. Again, write $\fnG{\mu}{n}{m}_k = \sqrt{ \fnG{\lambda}{n}{m}_k}$.

Define a generalization of $\Theb{n}$ as
$$ \TheG{n}{m} := \frac{m}{n  \fnG{\lambda}{n}{m}_2   } -1 =
\inf_{h: \ep h(S_n) = \ep S_n h(S_n) = 0} \frac{m \ep h(S_n)^2}{n \ep \left( (\CnG{n}{m}^* h) (S_m)^2 \right)}  -1 . $$
\end{definition}

As before, the Dembo--Kagan--Shepp identity \cite{DKS2001} ensures that $\TheG{n}{m} \geq 0$.
Note we recover Definition \ref{def:higher} by taking $m=1$. We now give a result which generalizes Proposition \ref{prop:fishbd1}.

\begin{proposition} \label{prop:fishbd2}
For i.i.d. $Y_1, \ldots, Y_n$ the standardized Fisher information satisfies
$$ \Jst(U_n) \leq \frac{1}{1+ \TheG{n}{m}} \Jst(U_m), 
\mbox{ \;\;\; for $m \leq n$.}$$
\end{proposition}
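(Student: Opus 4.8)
The plan is to mimic the proof of Proposition~\ref{prop:fishbd1}, replacing the operator $\Cn$ by $\CnG{n}{m}$ throughout. The key structural fact I would use is that the score functions are again linked by conditional expectation: since $U_m = S_m/\sqrt{m}$ and $U_n = S_n/\sqrt{n}$ are both scaled sums of i.i.d.\ variables, we have $\varrho_{S_n}(s) = \ep[\varrho_{S_m}(S_m) \mid S_n = s]$ (this is the same identity \eqref{eq:scoresum} but with the single summand $Y_i$ replaced by the block sum $S_m$, and it holds because $S_n = S_m + (Y_{m+1} + \ldots + Y_n)$ is a convolution). In operator notation this reads $\varrho_{S_n} = \CnG{n}{m}\, \varrho_{S_m}$.

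First I would diagonalize $\CnG{n}{m}^* \CnG{n}{m}$, writing its orthonormal eigenfunctions in $\LL^2(\pr_{S_m})$ as $f_k$ with singular values $\fnG{\mu}{n}{m}_k$ and scaled images $g_k = \CnG{n}{m} f_k / \fnG{\mu}{n}{m}_k$ orthonormal in $\LL^2(\pr_{S_n})$ (an analogue of Assumption~\ref{ass:diag} and Definition~\ref{def:eigenfunctions} for this operator; this diagonalizability should be remarked on, or simply assumed as part of the standing hypotheses). By the DKS identity \eqref{eq:DKS}, $f_0 \equiv 1$, $f_1(u) = u/(\sigma\sqrt{m})$, $g_1(s) = s/(\sigma\sqrt{n})$ and $\fnG{\mu}{n}{m}_1 = \sqrt{m/n}$. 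Next I would expand the standardized score function of $U_m$ in this basis: as in the proof of Proposition~\ref{prop:fishbd1}, integration by parts shows that the coefficient of $f_1$ in $\varrho_{S_m}$ is exactly $-\sqrt{m}/\sigma$, which is precisely the coefficient needed so that $\varrho_{S_m}(s) + s/(m\sigma^2) = \sum_{k \geq 2} a_k f_k(s)$, i.e.\ the standardized score lives in the span of eigenfunctions with index $\geq 2$. Applying $\CnG{n}{m}$ and using $\varrho_{S_n} = \CnG{n}{m}\varrho_{S_m}$ together with the fact that $\CnG{n}{m}$ also maps $s/(m\sigma^2)$ correctly to $s/(n\sigma^2)$ (by the reverse-martingale property $\ep[S_m \mid S_n] = (m/n) S_n$), one gets $\varrho_{S_n}(s) + s/(n\sigma^2) = \sum_{k \geq 2} a_k \fnG{\mu}{n}{m}_k g_k(s)$.

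Then the computation is the scale-invariant Parseval argument of \eqref{eq:fisherdrop2}: using $\Jst(U_n) = \Jst(S_n) = \sigma^2 n\, \ep(\varrho_{S_n}(S_n) + S_n/(n\sigma^2))^2 = \sigma^2 n \sum_{k\geq 2} a_k^2 (\fnG{\mu}{n}{m}_k)^2$ by orthonormality of the $g_k$, and similarly $\Jst(U_m) = \sigma^2 m \sum_{k\geq 2} a_k^2$, the claim reduces to the inequality $n (\fnG{\mu}{n}{m}_k)^2 = n \fnG{\lambda}{n}{m}_k \leq \frac{m}{1 + \TheG{n}{m}}$ for all $k \geq 2$, which is exactly the definition of $\TheG{n}{m}$ in Definition~\ref{def:higher2} (rearranged). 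Assembling these bounds term by term gives $\Jst(U_n) \leq \frac{1}{1+\TheG{n}{m}} \Jst(U_m)$.

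I do not expect a serious obstacle here; the argument is a routine transcription of Proposition~\ref{prop:fishbd1} with $Y$ replaced by the block-sum $S_m$. The one point needing slight care is justifying $\varrho_{S_n} = \CnG{n}{m}\varrho_{S_m}$ and checking that the linear-term coefficient works out to exactly cancel --- i.e.\ that $\langle \varrho_{S_m}, f_1\rangle_{\pr_{S_m}} = -\sqrt{m}/\sigma$ via integration by parts --- so that the sums genuinely start at index $2$; this is the same calculation as in the $m=1$ case and goes through identically because $S_m$ is itself a random variable with a smooth density, mean $0$ and variance $m\sigma^2$. A secondary bookkeeping point is that we are implicitly invoking diagonalizability of $\CnG{n}{m}^*\CnG{n}{m}$, which should be folded into the standing assumptions alongside Assumption~\ref{ass:diag}.
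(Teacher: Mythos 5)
Your proposal follows exactly the route of the paper: diagonalize $\CnG{n}{m}^*\CnG{n}{m}$, identify the $k=0,1$ eigendata via DKS, expand $\varrho_{S_m}$ in that basis, observe the index-$1$ coefficient cancels the linear term so the standardized score lies in the span of index-$\geq 2$ eigenfunctions, apply $\CnG{n}{m}$, and finish with Parseval plus the eigenvalue bound encoded by $\TheG{n}{m}$. One arithmetic slip: integration by parts gives $a_1 = -1/(\sigma\sqrt{m})$, not $-\sqrt{m}/\sigma$; since $\fm{f}_1(y) = y/(\sigma\sqrt{m})$, this is precisely what makes $a_1\fm{f}_1(y) = -y/(m\sigma^2)$, so the cancellation you describe does hold and the rest of the argument is unaffected.
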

\begin{proof} We repeat the steps of the proof of Proposition \ref{prop:fishbd1}.
Again (see for example \cite[Eq. (3)]{courtade2016}, \cite{stam}) the score function of the sum satisfies 
$$ \varrho_{S_n}(s) = \ep \left[ \varrho_{S_m}(S_m) | S_n = s \right],$$
which we can rewrite as  $\varrho_{S_n} = (\CnG{n}{m} \varrho_{S_m})$.
Hence if we expand the score function 
$$ \varrho_{S_m} = \sum_{k=1}^\infty a_k \fm{f}_k$$ then
\begin{equation} \label{eq:scoresumexp2} \varrho_{S_n} = \sum_{k=1}^\infty a_k (\fnG{C}{n}{m} \fm{f}_k)
= \sum_{k=1}^\infty a_k \fnG{\mu}{n}{m}_k \fn{g}_k. \end{equation}
As before, direct calculation using integration by parts gives that 
$$ a_1 = \left\langle \varrho_{S_m}, \fm{f}_1 \right\rangle_{\pr_{S_m}} = \int p_{S_m}(y) \frac{ p'_{S_m}(y)}{p_{S_m}(y)} \frac{y}{\sigma \sqrt{m}} dy = - \int p_{S_m}(y) \frac{1}{\sigma \sqrt{m}} = - \frac{1}{\sigma \sqrt{m}}.$$
Again (as in Remark \ref{rem:eigen}.\ref{it:eigen1}) the $\fm{f}_1(y) = y/(\sigma \sqrt{m})$, $\fn{g}_1(s) = s/(\sigma \sqrt{n})$ with $\fnG{\mu}{n}{m}_1 = \sqrt{m/n}$ so
we can write the standardized score functions of $S_m$ and $S_n$ from
\eqref{eq:jstmain} as sums of eigenfunctions starting at index 2, as:
\begin{eqnarray}
\varrho_{S_m}(y) + \frac{y}{m \sigma^2} & = & \sum_{k=2}^\infty a_k \fm{f}_k(y), \\
\varrho_{S_n}(s) + \frac{s}{n \sigma^2} 
& = & \sum_{k=2}^\infty a_k \fnG{\mu}{n}{m}_k \fn{g}_k(s).
\end{eqnarray}
Just as before, we can use the orthonormality of $\fm{f}$ and $\fn{g}$ to deduce
\begin{eqnarray} 
  \Jst(U_n) = \Jst(S_n) & = & (n \sigma^2) \ep \left( \varrho_{S_n}(S_n) + \frac{S_n}{n \sigma^2} \right)^2 \\
& = & \sigma^2  n \left( \sum_{k=2}^\infty a_k^2 \fnG{\lambda}{n}{m}_k \right)   \nonumber \\
& \leq & \frac{1}{1+\TheG{n}{m}} m \sigma^2 \left( \sum_{k=2}^\infty a_k^2 \right) \nonumber \\
& = & \frac{1}{1+\TheG{n}{m}}  \Jst(S_m)
\label{eq:fisherdrop3}
\end{eqnarray}
using the fact that  $n \fnG{\lambda}{n}{m}_k  \leq m/(1+\Theb{n,m})$ for $k \geq 2$.
\end{proof}

As in \cite{courtade2016}, taking the Dembo--Kagan--Shepp bound $\TheG{n}{m} \geq 0$ in Proposition \ref{prop:fishbd2} we recover the monotonicity of standardized Fisher information \cite{artstein}. However, we can obtain better bounds by taking $k = n$ and $\ell = m$ in Lemma \ref{lem:DKShigher} to obtain
\begin{eqnarray*}
 \ep h(S_n)^2 & \geq & n \ep h_1(Y)^2 + \frac{n(n-1)}{m(m-1)}
\left( \ep \widehat{h}(S_m)^2 - m \ep h_1(Y)^2 \right) \\
& = &  \frac{n(n-1)}{m(m-1)}
\ep \widehat{h}(S_m)^2 - \frac{n (n-m)}{m-1} \ep h_1(Y)^2 \\
& \geq & \frac{n}{m(m-1)} \left( n-1 - \frac{(n-m)}{1+\Theb{m}}
\right) \ep \widehat{h}(S_m)^2.
\end{eqnarray*}
Rearranging, and optimizing over $h$ we deduce that \begin{equation} \label{eq:LamBD2}
1 + \Theb{n,m} = \frac{m \ep h(S_n)^2}{n\ep \widehat{h}(S_m)^2 }  \geq
\frac{1}{(m-1)} \left( n-1 - \frac{(n-m)}{1+\Theb{m}} \right)
 = \frac{ (n-1) \Theb{m} + (m-1)}{(m-1)(1+ \Theb{m})}.
\end{equation}
Since this is an increasing function of $\Theb{m}$, we can replace
$\Theb{m}$ by the lower bound $(m-1) \Theb{2}$ from \eqref{eq:LamBD}
to obtain 
$$ 1 + \Theb{n,m} \geq \frac{ 1+  (n-1) \Theb{2}}{ 1+  (m-1) \Theb{2}},$$
which, in Proposition \ref{prop:fishbd2} allows us to deduce the
 stronger form of monotonicity that:
 \begin{theorem}  
 \label{thm:main2}
 Consider i.i.d. $Y_i \sim Y$ with mean $0$ and variance $\sigma^2 < \infty$ and smooth densities on $\re$. Writing $\Theb{2}$ for the quantity from Definition \ref{def:higher},
 the standardized Fisher information has the property that
\begin{equation} \label{eq:jstmain2}
\left( 1 + (n-1) \Theb{2} \right) \Jst \left( \frac{Y_1 + \ldots + Y_n}{\sqrt{n}} \right) \mbox{ \;\;\; is non-increasing in $n$.}
 \end{equation}
 \end{theorem}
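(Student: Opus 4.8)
The plan is to derive Theorem~\ref{thm:main2} from the one-step comparison in Proposition~\ref{prop:fishbd2} together with a lower bound on the constant $\TheG{n}{m}$. Since Proposition~\ref{prop:fishbd2} gives $\Jst(U_n) \leq \Jst(U_m)/(1+\TheG{n}{m})$ for every $m \leq n$, it suffices to prove
\[
1 + \TheG{n}{m} \;\geq\; \frac{1 + (n-1)\Theb{2}}{1 + (m-1)\Theb{2}} \qquad (2 \leq m \leq n),
\]
because rearranging this inequality and inserting it into Proposition~\ref{prop:fishbd2} yields exactly $\left(1+(n-1)\Theb{2}\right)\Jst(U_n) \leq \left(1+(m-1)\Theb{2}\right)\Jst(U_m)$; the remaining comparison from $n=1$ to $n=2$ is just Theorem~\ref{thm:main} (equivalently, it is \eqref{eq:LamBD} fed into Proposition~\ref{prop:fishbd1}).

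First I would feed $h$ with $\ep h(S_n) = \ep S_n h(S_n)=0$ into Lemma~\ref{lem:DKShigher} with $k=n$ and $\ell=m$, writing $h_1(u)=\ep h(u+Y_2+\ldots+Y_n)$ and $\widehat h(v)=\ep h(v+Y_{m+1}+\ldots+Y_n)$. The crucial structural point, noted at the foot of \cite[p.345]{DKS2001} and coming from orthogonality of the Efron--Stein decomposition, is that $h_1$ is \emph{simultaneously} the two conditional expectations $\CnG{n}{1}^* h$ and $\CnG{m}{1}^* \widehat h$. Moreover $\widehat h$ inherits the normalization $\ep \widehat h(S_m) = \ep S_m \widehat h(S_m) = 0$ from $h$ (the second because the i.i.d. symmetry makes $\ep[Y_i h(S_n)]$ independent of $i$, so $\ep S_m h(S_n) = \tfrac{m}{n}\ep S_n h(S_n)=0$). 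Hence the variational characterization of $\Theb{m}$ in Definition~\ref{def:higher}, applied to $\widehat h$, gives $m\,\ep h_1(Y)^2 \leq \ep \widehat h(S_m)^2/(1+\Theb{m})$. Substituting this into the bound of Lemma~\ref{lem:DKShigher} and collecting terms gives
\[
\ep h(S_n)^2 \;\geq\; \frac{n}{m(m-1)}\left( n-1 - \frac{n-m}{1+\Theb{m}} \right)\, \ep \widehat h(S_m)^2,
\]
where the parenthesized coefficient is non-negative because $\Theb{m}\geq 0$ and $m\geq 1$ force $(n-m)/(1+\Theb{m}) \leq n-m \leq n-1$. Optimizing over $h$ and recalling that $1+\TheG{n}{m}$ is, by Definition~\ref{def:higher2}, the infimum of $m\,\ep h(S_n)^2/(n\,\ep\widehat h(S_m)^2)$ over such $h$, we obtain \eqref{eq:LamBD2}, namely $1+\TheG{n}{m} \geq \left((n-1)\Theb{m} + (m-1)\right)/\left((m-1)(1+\Theb{m})\right)$.

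To conclude, I would observe that the right-hand side of \eqref{eq:LamBD2} is increasing in $\Theb{m}$ (its $\Theb{m}$-derivative equals $(n-m)/\left((m-1)(1+\Theb{m})^2\right) \geq 0$ since $n \geq m$), so we may substitute the lower bound $\Theb{m} \geq (m-1)\Theb{2}$ from Proposition~\ref{prop:DKShigher}; the resulting expression simplifies to $\left(1+(n-1)\Theb{2}\right)/\left(1+(m-1)\Theb{2}\right)$, which is the inequality we needed, and hence \eqref{eq:jstmain2} follows. I expect the main obstacle to be the middle paragraph: making the dual reading of $h_1$ precise enough to legitimately apply the $\Theb{m}$ bound to $\widehat h$ rather than to $h$, and verifying that $\widehat h$ satisfies the orthogonality constraints required by Definition~\ref{def:higher}. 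Everything else is the algebra already packaged in Lemma~\ref{lem:DKShigher} plus the elementary monotonicity check.
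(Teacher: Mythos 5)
Your proposal is correct and follows essentially the same route as the paper: apply Lemma~\ref{lem:DKShigher} with $k=n$, $\ell=m$, use the dual reading of $h_1$ as $C^{(n)*}h = C^{(m)*}\widehat h$ to invoke the variational characterization of $\Theb{m}$, arrive at \eqref{eq:LamBD2}, and then exploit monotonicity in $\Theb{m}$ together with $\Theb{m}\geq(m-1)\Theb{2}$ before substituting into Proposition~\ref{prop:fishbd2}. You add two useful details that the paper leaves implicit: the explicit verification that $\widehat h$ inherits the normalization $\ep\widehat h(S_m)=\ep S_m\widehat h(S_m)=0$ (via exchangeability), and the sign check on the $\Theb{m}$-derivative of the right-hand side of \eqref{eq:LamBD2}; you also correctly isolate the $m=1$ step, which is outside the range $k>\ell\geq 2$ of Lemma~\ref{lem:DKShigher} and must be handled by Theorem~\ref{thm:main} directly.
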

 Note that this is a simultaneous strengthening of Theorem \ref{thm:main} and of the monotonicity of Fisher information proved in the i.i.d. case by Artstein et al. \cite{artstein} and \cite{courtade2016}.

\section{Future work} \label{sec:futurework}

We briefly mention some future directions for research. Note that some progress is made towards 1. and 2. in Appendix \ref{sec:subgausspf} below:
\begin{enumerate}
    \item In order to increase the value of these results, it is a natural question to ask for sufficient conditions (in terms of the density $p_Y$ or other related quantities) under which $\Theb{2} > 0$, and indeed to give explicit bounds of the form $\Theb{2} \geq c$ for some $c > 0$.
    \item Additionally, it would be of value to give conditions on $U$ under which we can bound $\Theb{2}$ uniformly away from $0$ for all $t> 0$, for random variables of the form $Y = U + Z_t$, where $Z_t$ is an independent Gaussian perturbation. Such a result would allow us to derive $O(1/n)$ convergence of relative entropy using the de Bruijn identity \cite{stam}.
    \item Since the monotonicity of entropy is equivalent to strengthened forms of Shannon's Entropy Power Inequality (see \cite{artstein, madiman, tulino}), it would be of interest to know if the strengthened monotonicity result Theorem \ref{thm:main2} implies a stronger Entropy Power Inequality.
    \item The results of this paper very much rely on the i.i.d. assumption. It is of interest to weaken this to the independent, but not identical setting, and indeed to dependent random variables, for example in the exchangeable setting. For example, Peccati \cite{peccati} shows that a decomposition of the Efron--Stein type used to establish the Dembo--Kagan--Shepp identity holds if an exchangeable sequence has the `weak independence' property. It is a natural question whether the results of this paper hold in that setting.
\item Following recent trends in information-theoretic Central Limit Theorems, it would be of interest to extend the results of this paper to the setting of $\re^d$, and to understand the behaviour of the eigenfunctions of $\Cn^* \Cn$ in this setting, where the equivalent of \eqref{eq:scoresum} still holds
(see e.g. \cite[Lemma 3.4]{johnson14}).
\end{enumerate}

\appendix

\section{Proof of Theorem \ref{thm:subgauss}} \label{sec:subgausspf}

The following argument was provided by an anonymous referee, for which the author is extremely grateful.

We write $\gamma_{\vd{\mu}; \Sigma}$ for a Gaussian density centred at $\vd{\mu}$ with covariance matrix $\Sigma$. As before, we write $D_{\chi^2}(f \| g) = \int (f(x)/g(x) -1)^2 g(x) dx= \int f(x)^2/g(x) dx - 1$ for the $\chi^2$-divergence. We first state two lemmas:

\begin{lemma} \label{lem:fdiv} For any coupling of $\vc{V} \sim p$ and $\vc{W} \sim q$:
$$ D_{\chi^2} \left( p \star \gamma_{\vc{0}; \Sigma_1} \| q \star \gamma_{\vc{0}; \Sigma_2} \right) \leq \ep \left( D_{\chi^2}( \gamma_{\vc{V}; \Sigma_1} \|  \gamma_{\vc{W}; \Sigma_2})
\right). $$
\end{lemma}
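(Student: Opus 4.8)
The plan is to deduce this from the joint convexity of the $\chi^2$-divergence in its two arguments, which is in turn just the convexity of the perspective function $\phi(a,b) := a^2/b$ on $\re \times (0,\infty)$. Recalling the identity $D_{\chi^2}(f \| g) = \int \phi(f(x),g(x))\, dx - 1$, the whole argument reduces to a pointwise application of Jensen's inequality followed by an interchange in the order of integration.

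First I would rewrite the two Gaussian-smoothed densities as expectations over the given coupling of $(\vc{V},\vc{W})$: since that coupling has first marginal $p$ and second marginal $q$, the convolution identities read, for every $x$,
$$ (p \star \gamma_{\vc{0};\Sigma_1})(x) = \ep\, \gamma_{\vc{V};\Sigma_1}(x), \qquad (q \star \gamma_{\vc{0};\Sigma_2})(x) = \ep\, \gamma_{\vc{W};\Sigma_2}(x). $$
For each fixed $x$, Jensen's inequality applied to the convex $\phi$ then gives
$$ \phi\left( \ep\, \gamma_{\vc{V};\Sigma_1}(x), \ep\, \gamma_{\vc{W};\Sigma_2}(x) \right) \leq \ep\, \phi\left( \gamma_{\vc{V};\Sigma_1}(x), \gamma_{\vc{W};\Sigma_2}(x) \right) = \ep \frac{\gamma_{\vc{V};\Sigma_1}(x)^2}{\gamma_{\vc{W};\Sigma_2}(x)}. $$

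Next I would integrate this inequality in $x$. The left side integrates to $D_{\chi^2}(p \star \gamma_{\vc{0};\Sigma_1} \| q \star \gamma_{\vc{0};\Sigma_2}) + 1$; on the right side the integrand is nonnegative, so Tonelli's theorem lets me swap the $x$-integral with the expectation over the coupling, producing $\ep\bigl( D_{\chi^2}(\gamma_{\vc{V};\Sigma_1} \| \gamma_{\vc{W};\Sigma_2}) + 1\bigr)$. Subtracting $1$ from both sides gives the claimed bound. If the right-hand side of the assertion is infinite there is nothing to prove, so one may assume it finite, which also makes all the manipulations above unambiguous.

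I do not expect a serious obstacle here. The one point that needs care is the joint convexity of $\phi(a,b)=a^2/b$ on $\re \times (0,\infty)$, which is classical — it is the perspective transform of the convex map $a \mapsto a^2$, equivalently one checks that its Hessian is positive semidefinite — and is precisely the statement that $\chi^2$-type divergences are jointly convex in $(f,g)$. Beyond that, the only thing to watch is bookkeeping: keeping straight which expectation (the average over the coupling of $(\vc{V},\vc{W})$ versus the $x$-integration) is being interchanged with which, and noting that this interchange is legitimate purely by nonnegativity, so no integrability hypothesis beyond finiteness of the right-hand side is required.
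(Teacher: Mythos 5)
Your proof is correct and follows essentially the same route as the paper, which simply cites the joint convexity of $f$-divergences (\cite[Lemma 4.1]{csiszar2004information}) and leaves the rest implicit. You have re-derived that joint convexity for $\chi^2$ from first principles via the perspective function $\phi(a,b)=a^2/b$, Jensen pointwise, and Tonelli, but the underlying idea --- recognize $p\star\gamma_{\vc{0};\Sigma_1}$ and $q\star\gamma_{\vc{0};\Sigma_2}$ as mixtures of $\gamma_{\vc{V};\Sigma_1}$ and $\gamma_{\vc{W};\Sigma_2}$ over the coupling and invoke joint convexity --- is exactly the paper's.
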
 
\begin{proof} Follows immediately from the joint convexity of $f$-divergences (see for example \cite[Lemma 4.1]{csiszar2004information}). \end{proof}

\begin{lemma} \label{lem:chi2val} For any $\rho \in (-1,1)$, write $I_2$ for the two dimensional identiy matrix, and define the positive semi-definite matrix 
$$ R_\rho = \left( \begin{array}{cc} 1 & \rho \\ 
\rho & 1 \\ \end{array} \right),$$
then for any $\vc{x}, \vc{y} \in \re^2$ and $\delta > 0$:
\begin{eqnarray*}
D_{\chi^2} \left( \gamma_{\vc{x}; \delta^2 R_\rho} \| \gamma_{\vc{y}; \delta^2 I_2} \right)
& = & \frac{1}{1-\rho^2} \exp \left( \frac{ (\vc{x} - \vc{y})^T R_\rho  (\vc{x} - \vc{y})}{(1-\rho^2) \delta^2} \right) - 1.
\end{eqnarray*}
\end{lemma}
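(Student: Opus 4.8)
The plan is to evaluate $D_{\chi^2}$ directly from its definition $D_{\chi^2}(f\|g) = \int f(\vc{u})^2/g(\vc{u})\,d\vc{u} - 1$, using that the integrand is a scalar multiple of an (unnormalised) Gaussian — this is the $d=2$ specialisation of the general closed form for the $\chi^2$-divergence between two Gaussians, but it is cleanest to compute it by hand. Writing the two densities out with $\det(\delta^2 R_\rho) = \delta^4(1-\rho^2)$ and $\det(\delta^2 I_2) = \delta^4$, one obtains
\[
\frac{\gamma_{\vc{x};\delta^2 R_\rho}(\vc{u})^2}{\gamma_{\vc{y};\delta^2 I_2}(\vc{u})} = \frac{1}{2\pi\delta^2(1-\rho^2)}\exp\left(-\frac{1}{2\delta^2}\left[2(\vc{u}-\vc{x})^T R_\rho^{-1}(\vc{u}-\vc{x}) - |\vc{u}-\vc{y}|^2\right]\right).
\]
The first step is to read off the matrix of the quadratic form in $\vc{u}$, namely $A := 2R_\rho^{-1} - I_2$. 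Since $R_\rho$ has eigenvalues $1 \pm \rho$, the matrix $A$ has eigenvalues $\tfrac{2}{1\pm\rho} - 1 = \tfrac{1\mp\rho}{1\pm\rho}$, which are strictly positive for $\rho \in (-1,1)$; this both guarantees convergence of the integral (the condition $2R_\rho^{-1} \succ I_2$) and gives $\det A = 1$.

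Next I would complete the square. With $\vc{b} := 2R_\rho^{-1}\vc{x} - \vc{y}$ and $c := 2\vc{x}^T R_\rho^{-1}\vc{x} - |\vc{y}|^2$, the bracketed expression above equals $(\vc{u} - A^{-1}\vc{b})^T A(\vc{u} - A^{-1}\vc{b}) - \vc{b}^T A^{-1}\vc{b} + c$. Carrying out the Gaussian integral over $\vc{u}\in\re^2$ contributes a factor $2\pi\delta^2/\sqrt{\det A} = 2\pi\delta^2$, which cancels the prefactor down to $1/(1-\rho^2)$, leaving
\[
\int \frac{\gamma_{\vc{x};\delta^2 R_\rho}(\vc{u})^2}{\gamma_{\vc{y};\delta^2 I_2}(\vc{u})}\,d\vc{u} = \frac{1}{1-\rho^2}\exp\left(-\frac{c - \vc{b}^T A^{-1}\vc{b}}{2\delta^2}\right).
\]
It then remains only to identify the exponent, i.e. to show that $\vc{b}^T A^{-1}\vc{b} - c = \tfrac{2}{1-\rho^2}(\vc{x}-\vc{y})^T R_\rho(\vc{x}-\vc{y})$.

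This last identity is the only slightly technical point, and I expect it to be the main (mild) obstacle — though it is purely $2\times2$ matrix algebra, so there is no conceptual difficulty, just bookkeeping. The useful facts are $A^{-1} = R_\rho^2/(1-\rho^2)$, which follows from $(2R_\rho^{-1} - I_2)R_\rho^2 = R_\rho(2I_2 - R_\rho) = (1-\rho^2) I_2$, together with the two identities $(1-\rho^2) R_\rho^{-1} = 2 I_2 - R_\rho$ and $R_\rho^2 + (1-\rho^2) I_2 = 2 R_\rho$. Substituting $A^{-1} = R_\rho^2/(1-\rho^2)$, expanding $\vc{b}^T R_\rho^2 \vc{b} = 4|\vc{x}|^2 - 4\vc{x}^T R_\rho \vc{y} + \vc{y}^T R_\rho^2 \vc{y}$ (using $R_\rho^{-1} R_\rho^2 R_\rho^{-1} = I_2$ and $R_\rho^{-1} R_\rho^2 = R_\rho$), and then using the two displayed identities to rewrite $2(1-\rho^2)\vc{x}^T R_\rho^{-1}\vc{x} = 4|\vc{x}|^2 - 2\vc{x}^T R_\rho \vc{x}$ and $\vc{y}^T R_\rho^2\vc{y} + (1-\rho^2)|\vc{y}|^2 = 2\vc{y}^T R_\rho \vc{y}$, everything collapses after cancellation to $2(\vc{x}-\vc{y})^T R_\rho(\vc{x}-\vc{y})$. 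As a sanity check, taking $\vc{x} = \vc{y}$ gives $D_{\chi^2} = 1/(1-\rho^2) - 1 = \rho^2/(1-\rho^2)$, the standard value for the $\chi^2$-divergence between $N(\vc{0}, R_\rho)$ and $N(\vc{0}, I_2)$; the only genuine care needed is in tracking the $2\times2$ algebra and recording the positive-definiteness condition that makes the integral finite.
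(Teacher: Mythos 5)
Your proposal is correct and takes essentially the same route as the paper, which proves the lemma in one sentence by observing that, after dividing by the claimed constant, the integrand $\gamma_{\vc{x};\delta^2 R_\rho}^2/\gamma_{\vc{y};\delta^2 I_2}$ is itself a Gaussian density that integrates to one; your completing-the-square computation (identifying $A = 2R_\rho^{-1} - I_2$ with $\det A = 1$, then simplifying $\vc{b}^T A^{-1}\vc{b} - c$ via $A^{-1} = R_\rho^2/(1-\rho^2)$) is simply the explicit unpacking of that assertion, and the algebra checks out.
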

\begin{proof}
The key is that for $\vc{u} \in \re^2$ we can express the ratio
\begin{eqnarray*}
\frac{ \gamma_{\vc{x}; \delta^2 R_\rho}^2(\vc{u})}{ \gamma_{\vc{y}; \delta^2 I_2}(\vc{u})}
\left( \frac{1}{1-\rho^2} \exp \left( \frac{ (\vc{x} - \vc{y})^T R_\rho  (\vc{x} - \vc{y})}{(1-\rho^2) \delta^2} \right) \right)^{-1}
\end{eqnarray*}
as a product of Gaussian densities, which integrate to 1.
\end{proof}

\begin{proof}[Proof of Theorem \ref{thm:subgauss}]
Write $X_i$ for i.i.d. copies of $X \sim p_X$ and independent $Z_i \sim \gamma_{0;\delta^2}$, and define regularized $Y_i = X_i + Z_i$. Define $V_n = n^{-1/2} \sum_{i=1}^n  X_i $ and
$U_n = n^{-1/2} \sum_{i=1}^n  Y_i $. Further, define $X_i'$ and $Z_i'$ to be independent copies of $X_i$ and $Z_i$ respectively, write $Y_i' = X_i' + Z_i'$, and define $V_n' = n^{-1/2} (  X_1' + X_2 + \ldots + X_n)$ and $U_n' = n^{-1/2} (  Y_1' + Y_2 + \ldots + Y_n)$.

Using the invariance of $f$-divergences under $1-1$ mappings we can write
$$ D_{\chi^2}( p_{Y_1, S_n} \| p_{Y} \times p_{S_n}) =   D_{\chi^2}( p_{Y_1, U_n} \| p_{Y} \times p_{U_n}).$$
For any $n$, we can consider the coupling between $p_{X,V_n}$ and $p_X \times p_{V_n}$ given by $\left( (X_1,V_n), (X_1, V_n') \right)$, and 
note that $(X_1 , V_n) - (X_1,V_n') = (0, n^{-1/2} (X_1 - X_1')$. This means that we can express the $\chi^2$-divergence 
arising in the formula for $T_n(Y)-1$ as
\begin{eqnarray}
 D_{\chi^2} ( p_{Y_1, U_n} \| p_{Y} \times p_{U_n}) &  \leq & \ep \left( D_{\chi^2} ( \gamma_{(X_1, V_n); \delta^2 R_\rho} \| \gamma_{ (X_1,V_n'); \delta^2 I_2}
\right) \nonumber \\
& = & \frac{1}{1-1/n} \ep \left( \exp \left( \frac{ (X_1 - X_1')^2}{(n-1) \delta^2} \right) \right) \label{eq:chi2bound2}
\end{eqnarray}
where we apply Lemma \ref{lem:fdiv} followed by Lemma \ref{lem:chi2val}, and where $\rho = 1/\sqrt{n}$.

If $X$ is sub-Gaussian, then (see for example \cite[Proposition 2.6.1]{vershynin}) so is $(X-X')$, and hence (see \cite[Proposition 2.5.2]{vershynin}) there exists a constant $\theta$ such that the moment generating function of $(X-X')^2$ satisfies
$$ \ep \exp \left( \lambda^2 (X-X')^2 \right) \leq \exp( \theta^2 \lambda^2),$$
when $|\lambda| \leq 1/\theta$. Hence taking $\lambda^2 = \frac{1}{(n-1) \delta^2}$ we deduce that \eqref{eq:chi2bound2} is bounded above by
$$ \frac{n}{n-1}  \exp \left( \frac{\theta^2}{(n-1) \delta^2} \right),$$
assuming that $\theta^2 \leq 1/\lambda^2$, or equivalently $n \geq 1 + \theta^2/\delta^2$.
\end{proof}

Observe that we can use \eqref{eq:chi2bound2} to deduce asymptotic bounds on $\Theb{n}$ for sub-Gaussian random variables under sufficiently large amounts of Gaussian regularization.
That is, if we write $\lambda_k^{n,\delta}$ or the eigenvalues of the operator $\Cn^* \Cn$, then since we can express the trace as
\begin{eqnarray*}
1 + \frac{1}{n}  + \sum_{k=2}^\infty \lambda_k^{n,\delta} & = & 1 +  D_{\chi^2} ( p_{Y_1, U_n} \| p_{Y} \times p_{U_n}) \leq 1 + \frac{n}{n-1} \ep \left( \exp \left( \frac{ (X_1 - X_1')^2}{(n-1) \delta^2} \right) \right),
\end{eqnarray*} 
so that Taylor expanding the exponential
\begin{eqnarray*}
n \sum_{k=2}^\infty \lambda_k^{n,\delta}
& \leq & \frac{1}{n-1} + \frac{2 n^2}{n-1} \frac{\var(X)}{(n-1) \delta^2} + \frac{n^2}{n-1} \sum_{r=2}^\infty \frac{ \ep(X_1-X_1')^{2r}}{r! (n-1)^r \delta^{2r}} \\
& \leq & \frac{1}{n-1} + \frac{2 n^2}{(n-1)^2 \delta^2} \var(X) + \frac{n^2}{n-1} \sum_{r=2}^\infty \frac{ \phi^{2r}}{(n-1)^r \delta^{2r}}
\end{eqnarray*}
using the fact that  (see \cite[P.26]{vershynin}) $\ep |X_1 - X_1'|^{2r} \leq \phi^{2r} (2r) \Gamma(r) = 2 \phi^{2r} r!$. Hence we deduce:

\begin{corollary} Writing $Y_i = X_i + Z_i$ where $X_i$ is sub-Gaussian and $Z_i \sim \gamma_{0;\delta^2}$, and writing $\lambda_k^{n,\delta}$ for the eigenvalues of the operator $\Cn^* \Cn$ we deduce:
\begin{equation} \label{eq:unifbd}  \limsup_{n \rightarrow \infty} n \lambda_2^{n,\delta} \leq
\limsup_{n \rightarrow \infty} n \sum_{k=2}^\infty \lambda_k^{n,\delta} \leq \frac{2\var(X)}{\delta^2},\end{equation}
and hence $\Theb{n} > 0$ for $n$ sufficiently large for sub-Gaussian random variables regularized by a sufficiently large amount.
\end{corollary}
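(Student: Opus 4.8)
The plan is to assemble the estimates already carried out in the paragraph immediately preceding the statement and then let $n\to\infty$; almost nothing remains beyond bookkeeping, since the substantive coupling argument has already been done in Theorem~\ref{thm:subgauss}.

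First I would pin down the trace identity. By Theorem~\ref{thm:subgauss}, for $n$ large enough $\Cn^{*}\Cn$ is trace-class, hence compact, hence (being self-adjoint) diagonalizable, so Assumption~\ref{ass:diag} holds and the trace satisfies $T_n(Y)=\sum_{k\geq 0}\lambda_k^{n,\delta}$ (by Parseval applied to the singular value expansion of $\tau_n$, or Lidskii's theorem \cite[Corollary 3.12.3]{simon2015}). Combining this with $\lambda_0^{n,\delta}=1$, the DKS value $\lambda_1^{n,\delta}=1/n$, and the chain $T_n(Y)-1=D_{\chi^2}(p_{Y_1,S_n}\|p_Y\times p_{S_n})=D_{\chi^2}(p_{Y_1,U_n}\|p_Y\times p_{U_n})$ (the last step being invariance of $f$-divergence under the bijection $S_n\mapsto U_n$), I get $\tfrac1n+\sum_{k\geq 2}\lambda_k^{n,\delta}=D_{\chi^2}(p_{Y_1,U_n}\|p_Y\times p_{U_n})$. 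Next I would insert the referee's bound \eqref{eq:chi2bound2}, this time retaining the $-1$, so that $D_{\chi^2}(p_{Y_1,U_n}\|p_Y\times p_{U_n})\leq \tfrac{n}{n-1}\ep\exp\bigl(\tfrac{(X_1-X_1')^2}{(n-1)\delta^2}\bigr)-1$; Taylor-expanding the exponential, integrating term by term, and using $\ep(X_1-X_1')^2=2\var(X)$ together with the sub-Gaussian moment bound $\ep|X_1-X_1'|^{2r}\leq 2\phi^{2r}r!$ then reproduces exactly the displayed inequality $n\sum_{k\geq 2}\lambda_k^{n,\delta}\leq \tfrac{1}{n-1}+\tfrac{2n^2\var(X)}{(n-1)^2\delta^2}+\tfrac{n^2}{n-1}\sum_{r\geq 2}\tfrac{\phi^{2r}}{(n-1)^r\delta^{2r}}$, valid whenever $(n-1)\delta^2>\phi^2$ so that the geometric tail converges.

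Finally I would take $\limsup_{n\to\infty}$ of both sides: the first term vanishes, the second tends to $2\var(X)/\delta^2$, and the third is $O(1/n)$ (the geometric tail is $\Theta(1/n^2)$ and is multiplied by $n^2/(n-1)$); since $0\leq \lambda_2^{n,\delta}\leq \sum_{k\geq 2}\lambda_k^{n,\delta}$, this gives \eqref{eq:unifbd}. If the regularization is large enough that $\delta^2>2\var(X)$, then $\limsup_n n\lambda_2^{n,\delta}<1$, hence $n\lambda_2^{n,\delta}<1$ for all sufficiently large $n$, and so $\Theb{n}=(n\lambda_2^{n,\delta})^{-1}-1>0$ for all such $n$. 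The only step requiring more than mechanical computation is the justification of term-by-term integration of the Taylor series and the convergence of the resulting series of moments; both follow from monotone convergence (every term is nonnegative) and the sub-Gaussian moment growth, and this is precisely what forces the constraint $(n-1)\delta^2>\phi^2$ — a constraint already subsumed by the range of $n$ on which Theorem~\ref{thm:subgauss} delivers compactness. I do not anticipate a genuine obstacle here, as the heavy lifting lies entirely in Theorem~\ref{thm:subgauss}.
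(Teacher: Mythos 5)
Your proof is correct and follows the same route as the paper: expand the trace via Lidskii's theorem, subtract off $\lambda_0=1$ and $\lambda_1=1/n$, bound the resulting $\chi^2$-divergence by the referee's coupling estimate, Taylor-expand the exponential, and take a limsup. The one place where you go beyond a mechanical assembly is in restoring the ``$-1$'' when you insert the bound from Lemma~\ref{lem:chi2val}, and this is in fact necessary for the arithmetic to close: with the $-1$, the constant term becomes $\frac{n^2}{n-1}-n-1=\frac{1}{n-1}\to 0$, matching the paper's displayed inequality, whereas without it one gets $\frac{n^2}{n-1}-1=\frac{n^2-n+1}{n-1}$, which diverges like $n$. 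The statement of Theorem~\ref{thm:subgauss} and of \eqref{eq:chi2bound2} both appear to have dropped this $-1$ (the equality in \eqref{eq:chi2bound2} cannot hold as written, given Lemma~\ref{lem:chi2val}), so you have identified and correctly repaired a typo rather than introduced a discrepancy.

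Two small points of bookkeeping. First, the interchange of expectation with the Taylor series is justified by Tonelli/monotone convergence since every term is nonnegative, as you say, and the resulting tail series converges precisely when $(n-1)\delta^2>\phi^2$, which is the same order of constraint as the $n\ge 1+\theta^2/\delta^2$ appearing in the proof of Theorem~\ref{thm:subgauss} (the constants $\phi$ and $\theta$ are not literally the same, since one comes from the moment bound $\ep|X_1-X_1'|^{2r}\le 2\phi^{2r}r!$ and the other from the MGF bound, but both are of the sub-Gaussian-norm scale and either suffices for $n$ large). Second, your concluding step $\Theb{n}=(n\lambda_2^{n,\delta})^{-1}-1>0$ when $\delta^2>2\var(X)$ is exactly what ``regularized by a sufficiently large amount'' is meant to encode.
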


Note that this allows us to deduce $O(1/n)$ convergence of Fisher information for random variables of this type. Indeed, it allows us to deduce $O(1/n)$ convergence of relative entropy using the de Bruijn identity (see for example \cite[Eq. (1.110]{johnson14}) which expresses the relative entropy of a random variable $U$ with density $f$ to a standard Gaussian density $\gamma_{0,1}$ as the integral of standardized Fisher information
\begin{equation} \label{eq:debruijn}
 D( f \| \gamma_{0,1}) = \frac{\log e}{2} \int_0^\infty \frac{ \Jst(U + Z_\tau)}{1+\tau} d\tau. \end{equation}
Using \eqref{eq:unifbd}, since adding $Z_\tau$ provides extra regularization we can deduce bounds for all $\tau$ on the second--largest eigenvalue of the form $\frac{2\var(X)}{\delta^2 + \tau}$ and combining Theorem \ref{thm:main} with \eqref{eq:debruijn} we deduce $O(1/n)$ convergence of relative entropy, using the fact that 
$$ \int_0^\infty \frac{1}{1+ \tau} \frac{1}{1 + (n-1)(C + t D)} d\tau = \frac{ \log \left( C/D + 1/(D(n-1)) \right)}{1+ (C-D) (n-1)} = O(1/n)$$
for $C > D$.

\section*{Acknowledgements}

The author would like to thank Professor Thomas Courtade of the Department of Electrical Engineering and Computer Sciences, University of California, Berkeley for extremely helpful discussions regarding this work, and for numerous pointers to relevant papers in the literature. I would also like to thank Professor Venkat Anantharam of the Department of Electrical Engineering and Computer Sciences, University of California, Berkeley for valuable suggestions concerning the maximal correlation. The idea to consider the eigenfunctions in the maximal correlation problem grew out of a Twitter conversation with Dr James V Stone, Honorary Reader in Vision and Computational Neuroscience at the University of Sheffield. The author would like to thank the Associate Editor, and three anonymous referees for their close reading of this paper and extremely helpful suggestions.

\end{document}